\setlist[itemize]{label=--}
\setlist[enumerate]{label=(\arabic*),labelindent=\parindent,leftmargin=*}
\definecolor{citecolor}{HTML}{0000C0}
\definecolor{urlcolor}{HTML}{000080}
\newcommand{\local}{\ensuremath{\mathsf{LOCAL}}\xspace}
\newcommand{\congest}{\ensuremath{\mathsf{CONGEST}}\xspace}
\newtheorem{theorem}{Theorem}
\newtheorem{lemma}[theorem]{Lemma}
\newtheorem{definition}[theorem]{Definition}
\DeclareMathOperator{\indegree}{indegree}
\DeclareMathOperator{\poly}{poly}
\DeclareMathOperator{\polylog}{polylog}
\newenvironment{myabstract}
{\list{}{\listparindent 1.5em%
		\itemindent    \listparindent
		\leftmargin    1cm
		\rightmargin   1cm
		\parsep        0pt}%
	\item\relax}
{\endlist}
\newenvironment{mycover}
{\list{}{\listparindent 0pt
		\itemindent    \listparindent
		\leftmargin    1cm
		\rightmargin   1cm
		\parsep        0pt}%
	\raggedright
	\item\relax}
{\endlist}
\newcommand{\myemail}[1]{\,$\cdot$\, {\small #1}}
\newcommand{\myaff}[1]{\,$\cdot$\, {\small #1}\par\medskip}
\title{Efficient Load-Balancing through Distributed Token Dropping}
\begin{document}

\begin{mycover}
	{\huge\bfseries\boldmath Efficient Load-Balancing through Distributed Token Dropping\par}

	\bigskip
	\bigskip

\textbf{Sebastian Brandt}
\myemail{brandts@ethz.ch}
\myaff{ETH Zurich}

\textbf{Barbara Keller}
\myemail{barbara.keller@aalto.fi}
\myaff{Aalto University}

\textbf{Joel Rybicki}
\myemail{joel.rybicki@ist.ac.at}
\myaff{IST Austria}

\textbf{Jukka Suomela}
\myemail{jukka.suomela@aalto.fi}
\myaff{Aalto University}

\textbf{Jara Uitto}
\myemail{jara.uitto@aalto.fi}
\myaff{Aalto University}

\bigskip
\end{mycover}

\medskip
\begin{myabstract}
  \noindent\textbf{Abstract.}
    We introduce a new graph problem, the \emph{token dropping game}, and we show how to solve it efficiently in a distributed setting. We use the token dropping game as a tool to design an efficient distributed algorithm for \emph{stable orientations} and more generally for \emph{locally optimal semi-matchings}. The prior work by Czygrinow et al.\ (DISC 2012) finds a stable orientation in $O(\Delta^5)$ rounds in graphs of maximum degree $\Delta$, while we improve it to $O(\Delta^4)$ and also prove a lower bound of $\Omega(\Delta)$. For the more general problem of locally optimal semi-matchings, the prior upper bound is $O(S^5)$ and our new algorithm runs in $O(C \cdot S^4)$ rounds, which is an improvement for $C = o(S)$; here $C$ and $S$ are the maximum degrees of customers and servers, respectively.
\end{myabstract}

\thispagestyle{empty}
\setcounter{page}{0}
\newpage

\section{Introduction}

We consider efficient distributed algorithms for assignment problems. The task is to assign each customer to one adjacent server, and the customers prefer servers with a low load, i.e., few other customers. We are interested in finding a \emph{stable assignment}, that is, an assignment in which no customer has incentive to unilaterally switch servers. The stable assignment problem, known as the \emph{locally optimal semi-matching} problem, was studied in the distributed setting by \citet{Czygrinow2012}.

We start with a restricted version of the problem, known as \emph{stable orientation}, which is a special case in which all customers can choose between exactly two servers, and we will show how the same ideas generalize to the stable assignment problem.

We approach these problems by introducing a new graph problem called the \emph{token dropping game}.  We show how to solve the token dropping game efficiently in the distributed setting, and how an efficient solution to the token dropping game can be employed to solve the stable orientation and the stable assignment problem efficiently.
Furthermore, we prove lower bounds for the token dropping game as well as for the stable assignment problem.

The main technical contribution is the improvement of the distributed round complexity of stable \emph{orientations} from $O(\Delta^5)$ to $O(\Delta^4)$, and the lower bound of $\Omega(\Delta)$ for graphs of maximum degree $\Delta$, both of these in the standard \local~\cite{Linial1992,Peleg2000} model of distributed computing.
We additionally show that the more general problem of finding stable \emph{assignments} can be solved in $O(C \cdot S^4)$ time, where $C$ and $S$ are the maximum degrees of customers and servers, respectively. \Cref{table:summary} gives as summary of our new bounds and prior bounds.

\subsection{Stable orientations}

Consider the following problem on a graph:
\begin{framed}
\noindent
All edges are oriented, and each edge wants to selfishly minimize the indegree of the node to which it is pointing, by flipping or keeping its orientation.
\end{framed}
\noindent More precisely, we say that an oriented edge $e = (u,v)$ is \emph{happy} if 
\[
\indegree(v) \le \indegree(u) + 1,
\]
that is, turning the orientation of $e$ from $(u,v)$ to $(v,u)$ would not lower the indegree of the head of the edge $e$. An orientation is \emph{stable} if all edges are happy. See Figure~\ref{fig:ex-stable} for examples.

\begin{figure}[b]
\centering
\includegraphics[page=1]{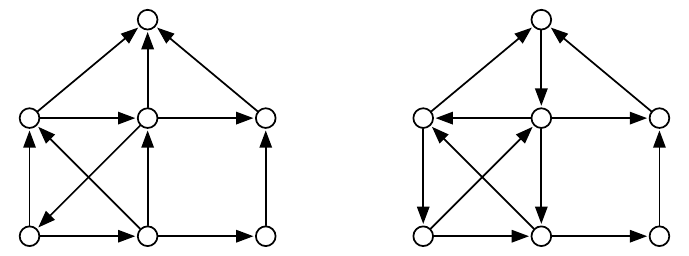}
\caption{Examples of stable orientations.
We can interpret each edge as a customer and each node as a server: a customer points to the server it is using, and the indegree of a server represents its load. A customer is happy if it cannot get a better service by unilaterally switching~servers.
}\label{fig:ex-stable}
\end{figure}

\paragraph{Customers and servers.}
We can interpret each edge as a \emph{customer} and each node as a \emph{server}.
If an edge $e = \{u,v\}$ is oriented from $u$ to $v$, then customer $e$ is using server $v$. The \emph{load} of a server is the total number of customers using it, i.e., its indegree. Customers would like to use servers with a low load, in order to maximize the quality of service they receive.

\paragraph{Centralized sequential algorithms.}

There is a simple centralized sequential algorithm that finds a stable orientation: start with an arbitrary orientation and then repeatedly pick an arbitrary  unhappy edge  and flip it. Flipping one edge may create new unhappy edges. However, it is easy to see that the algorithm will terminate in polynomial time in the number of nodes: the sum of squared indegrees is strictly decreasing.

This also shows that a stable configuration serves simultaneously two purposes: it is a \emph{game-theoretic equilibrium}, and it is also a \emph{local optimum} in a load-balancing problem in which the goal is to minimize the sum of squared loads.

\paragraph{Efficient distributed algorithms.}

The centralized algorithm is inherently sequential, and it may lead into a long propagation chain: flipping one unhappy edge creates another unhappy edge, flipping that one creates yet another, and such changes may eventually propagate throughout the graph.

Surprisingly, we can do much better in a distributed setting: \citet{Czygrinow2012} gave a distributed algorithm that finds a stable orientation in $O(\Delta^5)$ communication rounds, where $\Delta$ is the maximum degree of the graph. Remarkably, the running time is independent of the size of the graph, and only depends on $\Delta$. Even if we have an infinite graph, as long as the maximum degree is bounded, the nodes can collectively find a stable configuration in finite time. However, $O(\Delta^5)$ hardly sounds like a natural barrier for such a simple problem, but so far no improved algorithms or nontrivial lower bounds are known. In this work we present both.

The algorithm by \citet{Czygrinow2012} is highly efficient for graphs with very low degrees. In the case of large degree graphs, \citet{assadi2020improved} and \citet{halldorsson2018distributed} provide efficient approximation algorithms. Please refer to Section~\ref{sec: load balancing} for a more detailed discussion. In this work, we focus on the small degree case and do not try to optimize the dependency on $n$.

\begin{table*}
    \centering
    \begin{tabular}{@{}l@{\qquad}ll@{\qquad}ll@{\qquad}ll@{}}
    \toprule
    Problem
    & \multicolumn{2}{@{}l@{\qquad}}{Prior work}
    & \multicolumn{2}{@{}l@{\qquad}}{New upper bound}
    & \multicolumn{2}{@{}l}{New lower bound} \\ \midrule
    Token dropping
    &&
    & $O(L \cdot \Delta^2)$ & Thm.~\ref{thm: ballsruntime}
    & $\Omega(L + \Delta)$ & Thm.~\ref{thm: lowertoken}\\
    Token dropping, $2 \le L \le 3$
    &&
    & $O(\Delta)$ & Thm.~\ref{thm: 3layermagic}
    & $\Omega(\Delta)$ & Thm.~\ref{thm: lowertoken} \\
    Stable orientations
    & $O(\Delta^5)$ & \cite{Czygrinow2012}
    & $O(\Delta^4)$ & Thm.~\ref{theorem: stable}
    & $\Omega(\Delta)$ & Thm.~\ref{thm:orientation-lb}  \\ 
    Stable assignments
    & $O(S^5)$ & \cite{Czygrinow2012}
    & $O(C \cdot S^4)$ & Thm.~\ref{theorem: hyper}
    & $\Omega(S)$  & Thm.~\ref{thm:orientation-lb}   \\
    0-1-many assignments 
    &&
    & $O(C)$ & Thm.~\ref{thm: 2boundstable}
    & $\Omega(\min\{S,C\})$ & Thm.~\ref{thm: hyperlower012} \\
    \bottomrule
    \end{tabular}
    \caption{Summary of the new results and prior work: $\Delta$ = the maximum degree of the graph, $L$ = the height of the game, $C$ = the maximum degree of customers, and $S$ = the maximum degree of servers.}
    \label{table:summary}
\end{table*}

\subsection{Token dropping game}

\begin{figure}[t]
    \centering
    \includegraphics[page=2,width=0.8\columnwidth]{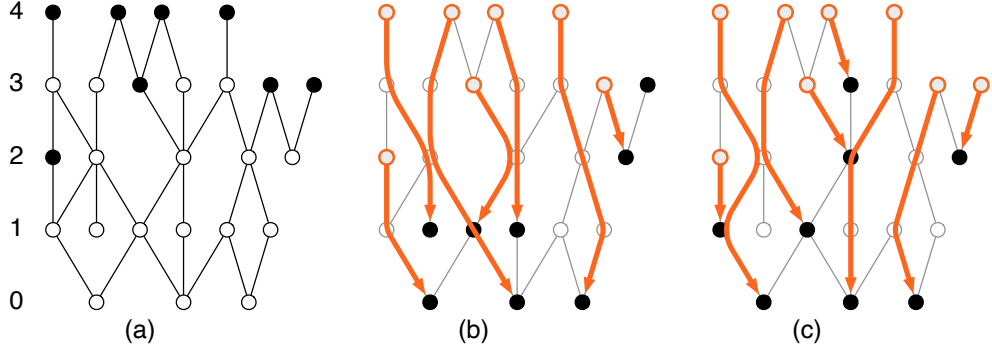}
    \caption{Token dropping game. (a)~Input; black nodes hold tokens. (b)--(c)~Examples of feasible solutions. The orange arrows represent the paths that the tokens followed until they reached their final positions.}\label{fig:ex-drop}
\end{figure}

The key new idea that we use to solve the orientation problem is to introduce a new graph problem that we call the \emph{token dropping game}. The game is illustrated in Figure~\ref{fig:ex-drop}. The input consists of a graph in which the nodes are organized in \emph{layers}, numbered from $0$ to $L$. Some of the nodes hold a token; a node can hold at most one token. The rules are simple:
\begin{framed}
    \noindent A token can move downwards from layer $\ell$ to layer $\ell-1$ along any edge to any node that does not currently hold a token. Each edge can be used at most once during the entire game.
\end{framed}
\noindent Put otherwise, once an edge has been used to move a token, it is deleted.
The task is to find a possible sequence of token movements such that we reach a configuration in which no token can be moved any further. I.e., the only goal of this single player game is to get stuck.

\paragraph{Centralized sequential algorithms.}

Again, there is a trivial centralized sequential algorithm for solving the token dropping problem: repeatedly pick any token that can be moved downwards and move it by one step. Eventually no token can be moved and the game is solved.

\paragraph{Efficient distributed algorithms.}

In this work we show that the token dropping problem can be solved in $O(L \cdot \Delta^2)$ rounds with a distributed algorithm; we also prove a lower bound of $\Omega(L + \Delta)$ rounds. Moreover, for $2 \le L \le 3$ we show that the problem can be solved in $O(\Delta)$ rounds, matching the lower bound.

\paragraph{Using token dropping to find stable orientations.}

We show that any algorithm that solves token dropping in $T(L,\Delta)$ rounds can be used to find a stable orientation in $O(\Delta \cdot T(\Delta,\Delta))$ rounds. Plugging in our algorithm for token dropping, we obtain an algorithm for finding a stable orientation in $O(\Delta^4)$ rounds, a factor-$\Delta$ improvement over the previous algorithm by \cite{Czygrinow2012}. We also prove a lower bound of $\Omega(\Delta)$ for any algorithm that finds a stable orientation.

\paragraph{New ideas.}

On a high level, the key new idea that enables us to save time in comparison with the prior algorithm is the following:
\begin{itemize}
    \item In the prior work, one starts with an arbitrary orientation. This potentially creates a large amount of unhappiness and resolving it takes a lot of time.
    \item In our work we orient edges more carefully, so that there is always at most one unit of excess load per node. We play token dropping with the excess load in order to resolve unhappiness. After $O(\Delta)$ such iterations, all edges are happily oriented.
\end{itemize}

\subsection{Generalization: stable assignment and semi-matchings}

Let us define a generalization of stable orientations as follows: Consider a bipartite graph in which we have customers on one side and servers on the other side. The task is to assign each customer to one server. Again, the customers would like to selfishly minimize the load of the server to which they are assigned to. We call a solution in which no customer wants to change its assigned server a \emph{stable assignment}.

Note that the stable orientation problem is a special case of the stable assignment problem with degree-$2$ customers only. The prior algorithm by \cite{Czygrinow2012} also solves the more general stable assignment problem. While our focus is on the orientation problem, we will also explain in this work how to generalize our algorithm beyond degree-$2$ customers. If the maximum degree of a customer is $C$ and the maximum degree of a server is $S$, our algorithm runs in $O(C \cdot S^4)$ rounds and the algorithm from prior work runs in $O(S^5)$ rounds. For the balanced case $\Delta = C = S$, both of the algorithms run in $O(\Delta^5)$ rounds. However, when the maximum degree of a customer is small, i.e., $C = o(S)$, our algorithm runs in $o(\Delta^5)$ rounds.

The stable assignment problem is closely connected to a load balancing problem known as \emph{semi-matching} \cite{Harvey2006}. There the task is to assign each customer to one server while minimizing the objective function $\sum_v f(g(v))$, where $g(v)$ is the number of customers assigned to server $v$ and $f(x) = 1 + 2 + \dotsb + x$. In essence, this is almost the same problem as minimizing the sum of squared loads. As observed by \cite{Czygrinow2012}, a stable assignment (in their terminology, a \emph{non-swappable semi-matching}) is also a factor-$2$ approximation of the optimal semi-matching. Hence our work gives a faster $2$-approximation algorithm for semi-matchings in the case of low-degree customers and high-degree servers.

A more refined analysis in \cite{Czygrinow2012} also shows that a stable assignment is a factor $1+1/\alpha$ approximation of the optimal semi-matching, where $\alpha = \max\{1,(\beta+1)/2\}$, and $\beta$ is the ratio of the number of servers to the number of customers. Hence when there are many customers and few servers, they have got a large $\beta$ and $\alpha$ and an approximation ratio close to $1$. Exactly the same holds for our algorithm, as this is a property of any stable assignment.

\paragraph{Open question for future work.}

A stable assignment can be used to find a $2$-approximation for semi-matching. However, any algorithm that finds a stable assignment takes at least $\Omega(\Delta)$ rounds. Is it possible to find a $2$-approximation of semi-matching in time $o(\Delta)$ by some other means (without going through a stable assignment)?

\subsection{Relaxation: 0--1--many assignments}

We conjecture that finding a stable orientation requires at least $\Omega(\Delta^2)$ rounds, i.e., it is strictly harder than e.g.\ the problem of finding a maximal matching in a bipartite graph, which is solvable in $\Theta(\Delta)$ rounds. If this is indeed the case, stable orientations would be a rare example of a natural graph problem that is solvable in $\poly(\Delta)$ but not in $O(\Delta)$ rounds (see Section~\ref{sec:related} for more discussion on related work).

We are currently still far from being able to prove superlinear lower bounds for stable orientations with the present techniques, but in this work we provide evidence suggesting that at least the general stable orientation problem is unlikely to be solvable in $O(\Delta)$ rounds.

To do this, we consider the following highly relaxed version of stable orientations: each customer is assigned to one adjacent server, and a customer does not want to use a server of load at least $2$ if there is a server of load $0$ available. In essence, this is a $0$--$1$--many version of stable orientations: customers only care about the difference between servers of load $0$, load $1$, and load at least $2$. We prove that the $0$--$1$--many version can be solved in $O(C)$ rounds, and in the balanced case $\Delta=S=C$, the problem requires $\Theta(\Delta)$ rounds. Hence the best upper bound for the relaxed version is much lower than for the general stable assignment problem, which is solvable in $O(\Delta^5)$ rounds. This suggests that the relaxed version is indeed strictly easier than the general version. If one could prove a strict separation between the relaxed version and the general version, then our $\Omega(\Delta)$ lower bound for the relaxed version would imply an $\omega(\Delta)$ lower bound for the general version, providing one of the first examples of a natural graph problem with a superlinear-in-$\Delta$ complexity.

\paragraph{Open question for future work.}

Can we show that the $0$--$1$--many version of stable assignments is strictly easier than the general version? For example, if we have an algorithm that solves the general version in graphs of maximum degree $\Delta$ in some time $T$, can we use it to solve the $0$--$1$--many version in graphs of maximum degree $\Delta' \gg \Delta$, in the same time~$T$?

\subsection{Organization of the paper}

Our paper is organized as follows. We start by discussing additional related work in \Cref{sec:related} and formalize the model of computing in \Cref{sec:model}. In \Cref{sec: token-dropping}, we introduce the token dropping game and give an upper and a lower bound for its complexity. In \Cref{sec: findStable}, we show how to use the token dropping to find stable orientation in $O(\Delta^4)$ rounds and in \Cref{sec:lowerorient}, we give a $\Omega(\Delta)$ lower bound. Then, in \Cref{sec: stable-ass}, we show to generalize our techniques to the stable assignment problem and discuss the complexity of the $0$-$1$-many relaxation of the stable assignment problem.

\section{Related work}\label{sec:related}

\subsection{Distributed complexity of locally verifiable problems}

This work is part of the ongoing effort of understanding the distributed computational complexity of \emph{locally verifiable problems}. In brief, these are problems in which a solution is globally correct if it looks good in all constant-radius neighborhoods. Stable orientations are by definition locally verifiable: if all edges are happy, the orientation is stable, and the happiness of an edge only depends on the other edges adjacent to it.

The study of locally verifiable problems in distributed computing plays a role similar to the study of the class NP in classical centralized sequential computing: given a problem in which solutions are easy to verify, what can we say about the complexity of \emph{finding} a feasible solution?

Typically, the complexity of locally verifiable problems is studied as a function of two parameters, the number of nodes $n$ and the maximum degree $\Delta$. In essence, these capture two complementary notions of scalability: how does the complexity of finding a solution increase when the input graph gets larger vs.\ when the input graph gets denser.

In general, the landscape of the distributed computational complexity for each possible combination of $n$ and $\Delta$ is complicated, but there are many problems that provide an opportunity to focus on one parameter only. To study the distributed complexity as a function of $n$, we can simply set $\Delta = O(1)$ and hence focus on bounded-degree graphs. In this case there are two important families of locally verifiable problems:
\begin{itemize}
    \item Symmetry-breaking problems, such as maximal matching, maximal independent set, vertex coloring, and edge coloring: all of these problems can be solved in $O(\log^* n)$ rounds \cite{cole86deterministic,Goldberg1988}, and this is tight \cite{Linial1992,Naor1991}.
    \item Orientation and splitting problems, such as sinkless orientation, sinkless and sourceless orientation, almost-balanced orientation, and almost-balanced splitting: all of these problems can be solved in $O(\log \log n)$ with randomized algorithms and in $O(\log n)$ rounds with deterministic algorithms \cite{ghaffari17distributed,Ghaffari2017a}, and these are tight \cite{Brandt2016,chang16exponential}.
\end{itemize}

The other dimension, dependency on $\Delta$, requires more care, as one cannot merely set $n = O(1)$ and study asymptotics as a function of $\Delta$. Therefore it is helpful to identify natural examples of graph problems that can be solved in $T(\Delta)$ rounds for some function $T$, \emph{independently of $n$}. In essence, we can set $n = \infty$ and study the complexity as a function of~$\Delta$. Key examples of problems that can be solved in $T(\Delta)$ rounds include:
\begin{itemize}
    \item Maximal matching on bipartite graphs can be solved in $O(\Delta)$ rounds \cite{Hanckowiak1998}, but not in $o(\Delta)$ rounds \cite{Balliu2019}.
    \item Maximal fractional matching can be solved in $O(\Delta)$ rounds \cite{Astrand2010}, but not in $o(\Delta)$ rounds \cite{Goos2017}.
    \item Weak coloring in odd-degree graphs can be solved in $O(\log^* \Delta)$ rounds \cite{Naor1995}, but not in $o(\log^* \Delta)$ rounds \cite{Brandt2019automatic}.
\end{itemize}
All of the above bounds are at most linear in $\Delta$. Stable orientation is perhaps one of the simplest locally verifiable graph problems that is known to be solvable in $T(\Delta)$ rounds, but for which the current upper bound is \emph{superlinear} in $\Delta$. By prior work, we do not have any nontrivial lower bounds for stable orientations, and the best upper bound is $O(\Delta^5)$. The recent advances in the techniques for proving lower bounds \cite{Brandt2019automatic,Olivetti2019,Balliu2019,Brandt2016} suggest that now would be a good time to revisit the stable orientation problem and see how far we can get in closing the gap between upper and lower bounds. In this work we take the first steps in this direction, by improving the upper bound to $O(\Delta^4)$ and by proving a lower bound of $\Omega(\Delta)$.

\subsection{Distributed load balancing}\label{sec: load balancing}

We point out that stable orientations can be interpreted as a distributed load balancing problem. Imagine that there is a load token on each edge; the task is to move each such token to one endpoint so that the load cannot be locally balanced any further.

Now if we let the tokens move freely further away from their original locations, we arrive at the \emph{locally optimal load balancing} problem, studied in \cite{Feuilloley2015}. This is a problem that can be solved in time $T(L,\Delta)$ for some $T$, where $L$ is the maximum initial load. However, it is an open question whether the problem can be solved in time $\poly(L,\Delta)$. It was conjectured that locally optimal load balancing cannot be solved in $\poly(L,\Delta)$ rounds, and if this is the case, stable orientations and token dropping are a strictly easier problems than locally optimal load balancing.

The key aspect that makes stable orientations and token dropping easier to solve than load balancing is the restriction that we can only use each edge once. If we ``move'' one unit of load over an edge by flipping the edge (in stable orientations) or by sliding a token along the edge (in token dropping), the edge cannot be used any more for moving additional load in the same direction. If there is a bottleneck that separates large high-load and low-load regions, an algorithm for load balancing has to essentially move load tokens across such an edge one by one until the load is locally balanced, while an algorithm for stable orientation or token dropping will use the edge only once.

\citet{halldorsson2018distributed} considered the backup placement problem, where clients are tasked to select a set of $k$ neighboring servers on which backups are placed. They provide approximation algorithms for different optimization goals, e.g., minimize the maximum load while satisfying all clients, or maximize the number of satisfied clients. They give a randomized \congest algorithm that provides an approximation ratio of $O( \log n  / \log \log n)$ in $\polylog(n)$ rounds. In a similar vein, \citet{halldorsson2019distributed} gave randomized $\polylog(n)$-time approximation algorithms for the $k$-server assignment problem, where servers also have maximum capacity, and the objective is to maximize the total profit of satisfied clients subject to the server capacities.

\citet{assadi2020improved} gave randomized approximation algorithms for unweighted and weighted load balancing. For the \congest model, they gave algorithms that achieve a $O(1)$-approximation for unweighted graphs and $O(\log n)$-approximation for weighted graphs in $\polylog(n)$ rounds. Moreover, in the \local model, it is possible to achieve a $O(1)$-approximation algorithm in $\polylog(n)$ rounds with high probability. Recently, \citet{ahmadian2021convex} considered load balancing under general convex objective functions, where fractional solutions are permitted, and gave algorithms that give near-optimal solutions in $\log n \cdot \polylog(\Delta)$ rounds in the \congest model.

In this work, we focus on deterministic algorithms whose running times are independent of $n$. We operate in the \local model, but the message complexity of our algorithms are small, as the algorithms are based on only simple proposal strategies, where nodes request for single unit of load to be transferred. Thus, the algorithms can also be run in the \congest model, but also our lower bounds hold in the stronger \local model.

\section{Preliminaries}\label{sec:model}

In this work, we consider the standard \local model of distributed computing introduced by \cite{Linial1992}: Each node of the input graph $G = (V, E)$ is a computational entity and each edge $e = \{u, v\} \in E$ represents a bidirectional communication link. Computation proceeds in synchronous communication rounds and the message sizes are unbounded. The nodes are equipped with unique identifiers and initially, the only information that a node $u$ has are the identifiers of its neighbors. Throughout the paper, $n$ denotes the number of nodes and $\Delta$ denotes the maximum degree of the input graph.
We emphasize that even though we discuss directed edges in this work, communication is always allowed in both directions over a communication~link.

\section{The token dropping game}\label{sec: token-dropping}

In this section, we introduce the token dropping game slightly more formally and present our algorithm for solving the game. In the end of the section, we complement this result with a lower bound of $\Omega{(\Delta)}$ communication rounds via a reduction to the maximal matching problem. Interestingly, this lower bound already holds for games with $2$ levels.

The input for the token dropping game consists of a directed graph $G = (V, E)$ that contains no directed cycles and a set of \emph{tokens} $S = s_1, \ldots, s_k$. The tokens are assigned to the nodes such that each node contains at most one token. Furthermore, each node $v$ is assigned a level $\ell(v) \leq L$, where $L$ denotes the \emph{height} of the game. The levels of the nodes and the assignment of the tokens are given by an adversary. The nodes are not aware of any parameters, such as their level, the maximum degree $\Delta$, or the number of nodes $n$ in the beginning of the execution.

We say that if there is a directed edge $(u, v) \in E$ from $u$ to $v$, then $v$ is a \emph{parent} of $u$ and conversely, $u$ is called a \emph{child} of $v$. If $v$ is a parent of $u$, then the level function must satisfy the condition\footnote{All of our results work even if we allow that $\ell(v) >\ell(u) + 1$ for a parent $v$ of $u$. For the sake of the presentation, we chose to restrict the discussions to the case where the edges are between adjacent levels.} $\ell(v) = \ell(u) + 1$.

\paragraph{Objective.}

The basic principle is that node $u$ can pass a token to a child and in the end of the execution, $u$ is only allowed to possess a token if it cannot pass its token to any of its children. For each token $s$, the goal is to find a path $p_s = (v_1, \ldots, v_d)$ from its original node $v_1$ to its destination $v_d$, where for every $1 < i \leq d$, node $v_{i - 1}$ is the parent of node $v_{i}$. Formally, for each node $v$, the output is a set of pairs of incoming and outgoing edges, where each pair corresponds to a path of a token traveling through.\footnote{Note that the traversals of the tokens can be derived from the node-centered output in at most $L$ communication rounds.} If $v$ initially contains a token, then the set is allowed to have one singleton outgoing edge in the set and similarly, if $v$ is the destination of a token, there can be one singleton incoming edge. Notice that the token dropping game satisfies the preconditions of a locally checkable problem.

The path $p_s$ for token $s$ is referred to as the \emph{traversal} of $s$. There are three rules:
\begin{enumerate}
    \item Each edge is used at most once, i.e., the traversals are edge-disjoint. We say that an edge is \emph{consumed} once it is traversed by a token.
    \item The destination node for each token traversal is unique, i.e., for any two traversals $p_{s_1} = (v_1, \ldots, v_d)$ and $p_{s_2} = (u_1, \ldots, u_{d'})$ it holds that $v_d \neq u_{d'}$.
    \item Each traversal is \emph{maximal}, i.e., if $v$ is the destination node of a traversal $p_s$, then each incoming edge $(u, v)$ is either consumed by another traversal or child $u$ is the destination of another traversal.
\end{enumerate}

\subsection{The proposal algorithm}\label{sec: proposal}

Now, we present our algorithm for the token dropping game. The algorithm follows a simple proposal strategy, where a token is passed to a child whenever that is possible without causing any conflicts with other tokens that are passed. One of the most important ingredients of our analysis is to identify a way to measure the progress of a token on its traversal. In \Cref{lemma: activerounds}, we show that if a node is making proposals, then many edges are being consumed in its neighborhood. Then, in \Cref{lemma: activenode}, we show that, for any token $s$, we can find a fixed directed path (see \Cref{def: extended}) of nodes such that, if $s$ has not yet reached its destination, at least one node on this path is making progress.
Once all edges in the $2$-hop neighborhood of this path are consumed, the token $s$ must have reached its destination.
The goal of the rest of the section is to provide a proof for the following theorem.

\begin{theorem}\label{thm: ballsruntime}
    There is an algorithm that solves the token dropping game in $O(L \cdot \Delta^2)$ rounds, where $L$ is the height of the game.
\end{theorem}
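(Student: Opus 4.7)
The plan is to analyze the proposal algorithm by bounding, for each token $s$, how many rounds can elapse before $s$ has reached its final maximal position, and to charge each such round to the permanent deletion of an edge in a small, fixed region of the graph around $s$. Since that region will have only $O(L \cdot \Delta^2)$ edges, this caps the time for any single token and hence for the whole algorithm, as all tokens progress in parallel.

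First I would pin down the proposal algorithm precisely: in each round every token-holding node $v$ with at least one child that currently has no token sends a proposal along one such edge (say, to the child with the smallest identifier), each child that receives one or more proposals accepts one (again by identifier), the token slides along the accepted edge, and the used edge is deleted. A node all of whose children already hold tokens sits out. The output is maximal by construction, so the entire content of the theorem is the running-time bound.

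The heart of the proof is the combination of \Cref{lemma: activerounds} with \Cref{lemma: activenode}. The first says: any node $v$ that holds a token and is blocked (it still has some unused outgoing edge but cannot ship its token in the current round) witnesses the consumption of $\Omega(1)$ edges inside its closed $1$-hop neighborhood within the next $O(1)$ rounds. The reason is that for $v$ to be blocked, the contested children must have accepted better proposals from siblings of $v$, consuming some edge into them; and if $v$ does not propose at all then every child of $v$ holds a token, and a short case analysis on those children's proposals yields a consumed edge nearby. The second lemma then provides, for each token $s$, a fixed descending path $P_s$ of length at most $L+1$ (the \emph{extended path} of \Cref{def: extended}) with the property that as long as $s$ has not reached its final position, some node of $P_s$ is blocked in the sense above. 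The main obstacle I foresee is aligning the definition of $P_s$ with the tie-breaking rule of the algorithm so that the charging is consistent: each consumed edge should be attributable to a single token's extended path to avoid overcounting, which very likely forces $P_s$ to track the actual trajectory of $s$ under the algorithm rather than an arbitrary greedy descent.

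Putting the pieces together: while $s$ has not reached its destination, some node on $P_s$ is blocked and hence $\Omega(1)$ fresh edges in the $2$-hop neighborhood of $P_s$ are deleted every $O(1)$ rounds. Since $|P_s| \le L+1$ and every node has degree at most $\Delta$, this neighborhood contains at most $O(L \cdot \Delta^2)$ edges, so after $O(L \cdot \Delta^2)$ rounds there are no more edges to consume near $P_s$ and $s$ must have arrived at its destination. Taking the maximum over all tokens gives the claimed $O(L \cdot \Delta^2)$-round bound.
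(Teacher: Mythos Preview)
Your high-level plan matches the paper's: run a proposal-style algorithm, associate to each token an extended path of length at most $L$, and charge every round in which the token has not yet arrived to the consumption of an edge in the bounded neighborhood of that path. The counting at the end is exactly right.

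The gap is in the role you assign to the nodes. You have occupied nodes propose \emph{downwards} to empty children and then try to argue that a \emph{blocked occupied} node $v$ witnesses an edge deletion in its $1$-hop neighborhood. Your case~(b) (``$v$ does not propose because every child already holds a token; a short case analysis on those children's proposals yields a consumed edge nearby'') does not go through: if the occupied region extends several layers below $v$, none of $v$'s children propose either, and no edge within distance $1$ (or even $2$) of $v$ is consumed in that round. So the statement you attribute to \Cref{lemma: activerounds} is false as written, and the hoped-for ``short case analysis'' would have to recurse down the path, which defeats the per-node bound.

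The paper resolves this by reversing the direction: \emph{unoccupied} nodes request \emph{upwards} from a parent that has a token. Then the key object is not a blocked occupied node but the \emph{first unoccupied} node $v_j$ on the extended traversal $p_s^*$. \Cref{lemma: activenode} guarantees such a $v_j$ exists while $s$ is still in transit; $v_j$ is active (its predecessor on the path holds a token), so it sends a request, and whatever parent it requested from must dispatch its token to some child, consuming an edge incident to a parent of $v_j$. That is the clean, non-recursive per-round progress, and \Cref{lemma: activerounds} is simply the statement that a single node can be active-and-unoccupied for at most $\Delta^2$ rounds (at most $\Delta$ parents, each losing at most $\Delta$ edges). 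Summing over the at most $L$ nodes of $p_s^*$ gives $O(L\cdot\Delta^2)$. Your argument becomes correct if you make this switch from ``blocked occupied'' to ``active unoccupied''; as stated, the charging in case~(b) fails.
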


\paragraph{Algorithm details.}

We call a node \emph{active} if at least one of its parents has a token. Furthermore, we call a node \emph{occupied} if it contains a token and \emph{unoccupied} otherwise. Our algorithm works as follows. In every round, every active and unoccupied node requests a token from some parent that has a token, ties broken arbitrarily. If a node receives at least one request, then it passes the token to one (arbitrarily chosen one) of the children it received a request from. Notice that upon passing the token, the edge to the corresponding child is consumed and hence, removed from the game. If a node $u$ is occupied and has no children or is unoccupied and has no parents, then $u$ terminates. When a node terminates, we also remove it from the game. We note that each round of our algorithm actually consists of two synchronous communication rounds but for the sake of the presentation, we combine two communication rounds into one round for the rest of the discussion.

\begin{lemma}\label{lemma: correctness}
    The output of the proposal algorithm is correct.
\end{lemma}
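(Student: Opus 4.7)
My plan is to verify the three rules of the token dropping game separately, anchored by a single invariant: at every point of the execution, each node holds at most one token. This invariant holds because a node only issues a request when it is unoccupied, and in the same round in which a parent $v$ passes its token to a requesting child $u$, $v$ becomes unoccupied and $u$ becomes occupied, so no node ever ends up with two tokens.

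Given this invariant, the edge-disjointness rule is immediate from the algorithm's description: when a token crosses an edge, the edge is consumed and removed from the game, so no subsequent traversal can reuse it. The uniqueness of destinations follows directly, since destinations are precisely the nodes still holding a token at termination, and the invariant guarantees that distinct tokens occupy distinct nodes at every point in time.

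The main obstacle is the maximality rule, which I would handle by carefully tracking when edges disappear. An edge leaves the game in exactly two ways: it is consumed by a token traversal, or one of its endpoints is removed from the game. Fix a destination $v$; by definition $v$ terminated via the ``occupied with no children'' rule, so every original child $u$ of $v$ falls into one of two cases at the time of $v$'s termination: either (a) edge $(u,v)$ was consumed, in which case the corresponding token traversal witnesses the maximality condition; or (b) $u$ itself had already been removed from the game. In case (b), I split on $u$'s termination rule. If $u$ terminated as ``occupied with no children,'' then $u$ holds a token and is therefore itself a destination, as required. The remaining possibility --- that $u$ terminated as ``unoccupied with no parents'' --- I rule out by contradiction: at the time of $u$'s termination, $v$ was still in the game (since $v$ terminated strictly later) and $(u,v)$ had not been consumed (or else we would be in case~(a)), so $v$ would still be a parent of $u$, violating $u$'s termination condition.

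The delicate step is this final contradiction, which rests on the bookkeeping observation that an edge between two nodes both still present in the game is itself still present unless it has been consumed. Once that is spelled out, maximality follows, and together with edge-disjointness and uniqueness of destinations this yields correctness of the output.
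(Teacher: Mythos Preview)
Your proof is correct and follows essentially the same approach as the paper: both establish edge-disjointness directly from the consumption rule and derive maximality from the incompatibility of the two termination conditions across an unconsumed edge $(u,v)$ between a destination $v$ and a non-destination child $u$. Your version is in fact more complete than the paper's, since you explicitly state and use the one-token-per-node invariant and verify the uniqueness-of-destinations rule, both of which the paper leaves implicit.
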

\begin{proof}
    It is easy to verify that the traversals are edge-disjoint. 
    Upon traversal an edge is consumed, and hence any edge is traversed by at most one token.
    For maximality, suppose for a contradiction that there is an unoccupied node $u$ that has a parent $v$ with a token and edge $e = (u, v)$ was not consumed.
    Since $e$ is not consumed and $u$ is unoccupied, $v$ must have terminated before $u$.
    However, that is a contradiction since an occupied node does not terminate if it has any children.  
\end{proof}

\begin{figure}[t]
    \centering
    \includegraphics[page=3,width=0.8\columnwidth]{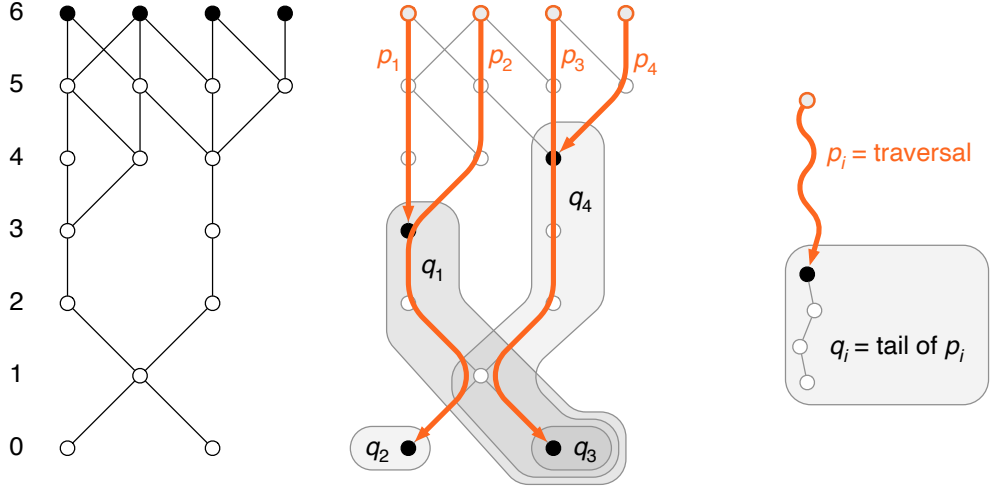}
    \caption{Examples of traversals $p_i$ and their tails $q_i$ (represented by the gray areas). We assume that the token following path $p_2$ reached its final destination before the token following path $p_3$.}\label{fig:tail}
    \label{fig: tails}
\end{figure}

\begin{definition}\label{def: extended}
    Consider the traversal $p_s = (v_1, \ldots, v_d)$ of token $s$ given by the proposal algorithm. We define the \emph{tail} of traversal $p_s$ as the longest path $(v_{d}, \ldots, v_{h})$ starting in $v_d$ with the property that, for any $d \leq i \leq h-1$, node $v_i$ passes at least one token to a child, and the last token $v_i$ passes down goes to node $v_{i+1}$. If $v_d$ did not pass any tokens further down, the tail only contains the node $v_{d}$. We refer to the concatenation $p^*_s = (v_1, \ldots, v_h)$ as the \emph{extended} traversal of $s$. See \Cref{fig: tails} for an illustration.
\end{definition}

\begin{lemma} \label{lemma: activerounds}
    Any node $u$ can be active and unoccupied in at most $O(\Delta^2)$ rounds.
\end{lemma}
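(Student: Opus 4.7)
The plan is to charge each round in which $u$ is active and unoccupied to a distinct edge incoming to one of $u$'s parents, and then bound the total number of such edges by $\Delta^2$.

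First, I will show that any round in which $u$ is active and unoccupied causes some edge incoming to a parent of $u$ to be consumed. Indeed, since $u$ is active, at least one parent of $u$ holds a token, so $u$ sends a request to some token-holding parent $v'$ in that round. Since $v'$ receives at least $u$'s request, the algorithm's rule forces $v'$ to pass its token down to one of its requesting children, which consumes the edge from that child to $v'$. This edge is incoming to the parent $v'$ of $u$.

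Next, I will count how many edges can possibly be charged in this manner. Each parent of $u$ has at most $\Delta$ children, hence at most $\Delta$ incoming edges, and the incoming-edge sets of distinct parents are disjoint (each edge has a unique head). Since $u$ has at most $\Delta$ parents, the pool of edges available for charging has size at most $\Delta \cdot \Delta = \Delta^2$. Because each consumed edge is removed from the game and cannot be consumed again, the charging is injective, and hence the number of rounds in which $u$ is active and unoccupied is at most $\Delta^2$.

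I do not expect a genuine obstacle here: the argument is a direct charging argument whose only mild subtlety is verifying that the charging is indeed injective, which is immediate from the fact that consumed edges are deleted from the game. Everything else is a one-line appeal to the algorithm's description.
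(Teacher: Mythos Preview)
Your proof is correct and follows essentially the same charging argument as the paper: in each active-and-unoccupied round, the parent $u$ requests from must hand its token down, consuming one edge incident to a parent of $u$, and there are at most $\Delta \cdot \Delta$ such edges. The paper counts all edges incident to parents rather than only incoming ones, but this is an immaterial difference and the bound is the same.
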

\begin{proof}
    Consider a round where $u$ is active and unoccupied.
    By definition of being active, there is at least one parent of $u$ that has a token and hence, $u$ requests a token from some parent $v$ of $u$.
    The parent $v$ will accept exactly one proposal that it receives and hence, its token will be passed on to a child $u'$ and the corresponding edge $(u', v)$ will be consumed.
    In other words, in every round that $u$ is active and unoccupied, at least one edge incident to some parent of $u$ will be consumed.
    Since, there are at most $\Delta$ parents of $u$ that have degree at most $\Delta$ each, all of their edges are consumed after $\Delta^2$ rounds in which $u$ is active and unoccupied. 
\end{proof}

\begin{lemma} \label{lemma: activenode}
    Consider an arbitrary token $s$ with traversal $p_s = (v_1, \ldots, v_d)$. At any point in time $t$ at which $s$ has not reached $v_d$ yet, at least one node is active and unoccupied in the extended traversal $p^*_s = (v_1, \ldots, v_d, \ldots, v_h)$ of $s$.
\end{lemma}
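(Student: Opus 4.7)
The plan is to argue by contradiction. Suppose at time $t$, token $s$ resides at some $v_j$ with $j < d$, and suppose no node of $p^*_s$ is both active and unoccupied. I first dispose of the easy case: if some $v_i$ with $i \in (j, h]$ is unoccupied at time $t$, then choosing the smallest such $i$ forces $v_j, v_{j+1}, \ldots, v_{i-1}$ to all be occupied (by minimality of $i$); in particular $v_{i-1}$ is occupied, and since $v_{i-1}$ is a parent of $v_i$ in $p^*_s$, the node $v_i$ is active while unoccupied---contradicting our assumption. So we may assume that all of $v_j, v_{j+1}, \ldots, v_h$ are occupied at time $t$.

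The heart of the argument is a downward induction on $i \in [d, h]$ showing that the token held by $v_i$ at time $t$ remains at $v_i$ for the rest of the execution. The base case $i = h$ uses the maximality of the tail: if $v_h$ ever passed any token, then the tail could be extended past $v_h$ to the destination of $v_h$'s last pass, contradicting that the tail ends at $v_h$. So $v_h$ never passes, and its current token stays. For the inductive step with $i \in [d, h-1]$, the tail property guarantees that $v_i$'s last pass goes to $v_{i+1}$ at some time $t_i$, and that $v_i$ performs no further passes afterwards. For this pass to succeed, $v_{i+1}$ must be unoccupied at $t_i$; however the inductive hypothesis tells us $v_{i+1}$ is occupied at every time $\geq t$, forcing $t_i < t$. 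Hence $v_i$ passes nothing after $t$, and its current token stays.

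Applying the claim at $i = d$ completes the contradiction: the token $\sigma_d$ held by $v_d$ at time $t$ stays at $v_d$ forever, so it must equal the token $v_d$ holds in the final configuration, namely $s$; yet $s$ is located at $v_j \neq v_d$ at time $t$, giving $\sigma_d \neq s$. The main subtlety I anticipate is ensuring the inductive step correctly exploits both parts of the tail property: the tail's defining ``last pass to $v_{i+1}$'' clause drives the induction, while the tail's maximality is needed for the base case at $v_h$; throughout, we rely on the fact that the edge between $v_i$ and $v_{i+1}$ can be used at most once, so $v_i$'s unique pass to $v_{i+1}$ is precisely its last pass.
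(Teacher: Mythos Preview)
Your proof is correct and follows essentially the same approach as the paper: both reduce to the case where every node $v_{j+1}, \ldots, v_h$ is occupied (via the ``first unoccupied node is active'' observation), then run a downward induction along the tail to show that $v_d$ can never become unoccupied, contradicting that $s$ still has to arrive there. Your write-up is simply more explicit about the timing argument in the inductive step; the closing remark about edge-disjointness is unnecessary (the tail definition already asserts that $v_i$'s \emph{last} pass goes to $v_{i+1}$), but it does no harm.
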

\begin{proof}
    Let $v_i$, for some $1 \leq i < d$ be the node that holds token $s$ at time $t$.
    Notice that by definition, the first unoccupied node $v_j$, with $i < j \leq h$ along $p^*_s$ has a parent with a token and hence, is active.
    Suppose for a contradiction that all nodes on the tail of traversal $p_s$ are occupied.
    By the definition of a tail, since the node $v_{h}$ is the last node in the tail, it will never pass its token to any of its children.
    Therefore, it will never become unoccupied.

    Again, by the definition of a tail, node $v_{h}$ is the last node that node $v_{h - 1}$ passes a token to.
    Since $v_{h}$ is already occupied and will never pass its token, it must be the case that it is already holding the token that $v_{h - 1}$ passes as its last token.
    Therefore, $v_{h - 1}$ will never become unoccupied.
    By induction, this holds for all the nodes on the tail of $p_s$, including the destination node $v_d$ of $s$.
    This contradicts the fact that node $v_d$ is the destination of $s$ and hence, yields the lemma.
\end{proof}

\begin{proof}[Proof of \Cref{thm: ballsruntime}]
    Consider an arbitrary token $s$. By definition, the length of the extended traversal $p^*_s$ of $s$ is at most $L$.
    It easy to verify that once all tokens have reached their destinations, all nodes will terminate in $O(L)$ rounds.
    The theorem follows by combining \Cref{lemma: activerounds,lemma: activenode,lemma: correctness}.
\end{proof}

\subsection{A lower bound}

To complement our upper bound for the token dropping game, we show a reduction from the maximal matching problem to the token dropping game.
A detail that might be of independent interest is that the lower bound already holds for the case of only $2$ levels.
We were able to design an algorithm for the case of at most $3$ levels that matches (the $\Delta$ dependency of) this bound but our approach does not seem to generalize beyond $3$ levels; the complexity of the token dropping game for $4$ or more levels is left as an open question for future work.

\begin{theorem}\label{thm: lowertoken}
    There is no deterministic algorithm that solves the token dropping game in $o(\Delta + \log n / \log \log n)$ rounds in the \local model. This holds even when restricted to games of height $2$.
\end{theorem}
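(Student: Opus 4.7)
The plan is to reduce bipartite maximal matching to the height-$2$ token dropping game. For bipartite maximal matching the lower bound $\Omega(\Delta + \log n / \log \log n)$ in \local is known (the $\Omega(\Delta)$ term from Balliu et al., the $\Omega(\log n / \log\log n)$ term from Kuhn, Moscibroda, and Wattenhofer), so it suffices to exhibit a round-preserving reduction.

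Given a bipartite graph $H = (A \cup B, E)$, I would construct the following token dropping instance: place each $a \in A$ at level $1$ and give it a token, place each $b \in B$ at level $0$, and for each edge $\{a,b\} \in E$ introduce the directed edge $(b,a)$ so that $a$ is the parent of $b$. This is a valid input since all edges go between adjacent levels, there are no directed cycles, and each node holds at most one token. Simulating one round of a token dropping algorithm costs $O(1)$ rounds on $H$, and conversely the node-centric output of the game can be read off as a set $M$ of $H$-edges (namely the edges used by length-$1$ traversals) in $O(1)$ extra rounds.

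The key verification step is that $M$ is always a maximal matching of $H$. That $M$ is a matching follows immediately: each $a \in A$ starts with only one token so contributes at most one edge, and by rule $2$ each $b \in B$ is the destination of at most one traversal. For maximality, suppose for contradiction that some edge $\{a,b\} \in E$ has both endpoints unmatched in $M$. Then $a$ kept its token (so $a$ is the destination of its own trivial traversal), the edge $(b,a)$ was not consumed, and $b$ is not a destination of any traversal. But then rule $3$ applied at the destination $a$ is violated by the child $b$, contradiction. Combined with the reduction, any $T$-round algorithm for height-$2$ token dropping yields an $O(T)$-round algorithm for bipartite maximal matching, and the cited lower bounds transfer.

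The only nontrivial point to check carefully will be that the parameters line up as claimed: the maximum degree of the constructed instance is exactly the maximum degree of $H$, and $|V| = |A| + |B|$ differs from the matching instance size by at most a factor of two, so both the $\Delta$-term and the $\log n / \log\log n$-term transfer without loss. Since the maximal matching lower bounds hold against deterministic algorithms in the \local model, the theorem follows in the same model. I do not anticipate a substantive obstacle beyond verifying the rule-$3$ argument above; the reduction is essentially a relabeling.
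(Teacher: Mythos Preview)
Your proposal is correct and follows essentially the same approach as the paper: both reduce bipartite maximal matching to a height-$2$ token dropping instance by placing one side at level~$1$ with tokens and the other at level~$0$, and both invoke the Balliu et al.\ lower bound. Your write-up is in fact more detailed than the paper's, which simply asserts that the traversals ``directly correspond to a maximal matching''; your explicit verification via rules~2 and~3 is exactly the argument behind that assertion.
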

\begin{proof}
    The proof of the theorem is a simple reduction from the bipartite maximal matching problem to the token dropping game. In a recent work, \cite{Balliu2019} showed that the bipartite maximal matching problem cannot be solved in $o(\Delta + \log n / \log \log n)$ rounds in the \local model of distributed computing. Consider a bipartite graph $G = (U\cup V, E)$ that is an input instance to the maximal matching problem. We create a token dropping instance with $2$ levels by considering every node $u \in U$ as a level-$1$ node with a token and every node $v \in V$ as a level-$0$ node. The traversals of the output for the token dropping game directly correspond to a maximal matching completing the reduction.
\end{proof}

\subsection{Token dropping through 3 levels}\label{sec: 3levels}

In this section, we show an algorithm that runs in $O(\Delta)$ rounds and solves the token dropping game when the number of levels is bounded by $3$. Our approach, however, does not seem to generalize for a larger number of levels. For the case of $4$ levels, the current best algorithm has runtime of $O(\Delta^2)$ (from \Cref{thm: ballsruntime}) introducing a gap of factor $\Delta$ between the cases of $3$ and $4$ levels. In the hope of finding better algorithms for an arbitrary number of levels, we believe that it is an interesting first step to solve the case of $4$ levels in time $o(\Delta^2)$.

\paragraph{Our algorithm for 3 levels.}

Our approach is a refined version of the proposal algorithm from \Cref{sec: proposal}. In the case of $3$ levels, we can leverage the fact the highest and the lowest layers only have level-$1$ nodes as neighbors. Inspired by this, the nodes in level $1$ take an active role and handle moving the tokens. More precisely, in every round, each active and unoccupied node in level $1$ requests a token from a parent that contains a token. Each node in level $2$ that gets a request passes its token to one child that made a request. Furthermore, each occupied node in level $1$ makes a proposal to an unoccupied child. Then, each node in level $0$ that receives a proposal accepts one of the received proposals and thereby the offered token.

Nodes in level $2$ terminate as soon as they are unoccupied and get removed from the game. Level $0$ nodes terminate if they are occupied or have no more parents left. Finally, nodes in level $1$ terminate if they are unoccupied and have no parents or if they are occupied and have no children.

\begin{theorem}\label{thm: 3layermagic}
    There is a deterministic algorithm that solves the token dropping game of height $3$ in $O(\Delta)$ rounds in the \local model.
\end{theorem}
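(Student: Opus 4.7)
The plan is to bound the lifetime of every node by $O(\Delta)$ and to recover correctness by the same reasoning as \Cref{lemma: correctness}. Correctness is essentially immediate: each token pass consumes exactly one directed edge, so traversals are edge-disjoint; each node holds at most one token at any time, so destinations are automatically distinct; and the termination rules ensure that once a node stops, every remaining incoming edge from an unoccupied child either has been consumed or the child itself has been chosen as a destination. The focus of the proof is therefore on the round complexity.

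The heart of the argument is showing that each level-$1$ node $u$ terminates within $2\Delta$ rounds. Fix a round in which $u$ is still alive. In the unoccupied case, every surviving parent of $u$ must still hold a token, because any unoccupied level-$2$ node has already terminated; hence $u$ is active and sends a request to some surviving parent $p$. Since $p$ receives at least one request (from $u$), it passes its token along some outgoing edge and thereby becomes unoccupied, so $p$ terminates at the end of the round. Whether or not $p$ chose $u$ as the recipient, the set of surviving parents of $u$ shrinks by at least one. Symmetrically, in the occupied case, every surviving child of $u$ is unoccupied (since occupied level-$0$ nodes terminate on the spot), so $u$ sends a proposal to some surviving child $c$; then $c$ accepts some proposal, becomes occupied, and terminates, shrinking $u$'s set of surviving children by at least one. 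Since $u$ starts with at most $\Delta$ parents and $\Delta$ children, it is alive for at most $2\Delta$ rounds.

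The remaining two levels piggyback on this bound. A level-$0$ node terminates as soon as it is occupied or has no parents left, so once all of its level-$1$ parents have terminated (within $O(\Delta)$ rounds) it also terminates. A level-$2$ node terminates as soon as it becomes unoccupied, and after $O(\Delta)$ rounds all of its level-$1$ children are gone, at which point it can safely commit to its current state as a destination. The main obstacle I expect is not the potential argument itself but verifying the two structural invariants carefully: that every surviving parent of an unoccupied, alive level-$1$ node really does hold a token, and that every surviving child of an occupied, alive level-$1$ node really is unoccupied. Both follow directly from the specific termination rules and must be invoked explicitly; once they are in place, the per-node $O(\Delta)$ bound reduces to a one-line counting argument.
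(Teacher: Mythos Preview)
Your proposal is correct and follows essentially the same approach as the paper: both argue that in every round a live level-$1$ node $u$ either requests from a surviving parent (which then passes its token and terminates) or proposes to a surviving child (which then accepts some token and terminates), so one neighbor of $u$ disappears per round. The only minor difference is that you count parents and children separately to get a $2\Delta$ bound, whereas the paper uses the total degree $\le \Delta$ directly; your explicit statement of the two structural invariants (surviving level-$2$ parents are occupied, surviving level-$0$ children are unoccupied) is a nice clarification that the paper leaves implicit.
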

\begin{proof}
    Due to the design of our protocol for $3$ levels, it is clear that no node will ever have $2$ tokens and all tokens are eventually passed down if there is an unoccupied child. For the runtime analysis, consider some node $u$ in layer $1$. Recall that if $u$ is occupied and has no children or if $u$ is unoccupied and has no parents, then $u$ terminates. Hence, we can assume that in every round, node $u$ either makes a request to a parent or a proposal to a child. In the case that $u$ requests a token from some parent $v$, the parent $v$ will accept at least one proposal. In this case, node $v$ will pass its token and become unoccupied and since $v$ is in level $2$ it will terminate. In the case that $u$ proposes to some child $c$, this child will accept at least one proposal. Node $c$ will receive a token and become unoccupied and since $c$ is in level $0$ it will terminate. Therefore, in every round, at least one neighbor of $u$ will terminate and hence, the runtime bound of $O(\Delta)$ communication rounds follows.
\end{proof}

\section{Finding a stable orientation}\label{sec: findStable}

In this section, we show how to efficiently find a stable orientation. The key idea is to utilize our algorithm for the token dropping game as a black box to maintain a stable partial solution throughout the execution and to carefully and gradually extend the partial stable solution to a complete stable solution.

\begin{theorem}\label{theorem: stable}
    There is a deterministic algorithm that finds a stable orientation in $O(\Delta^4)$ communication rounds. This runtime is independent of the size of the input graph.
\end{theorem}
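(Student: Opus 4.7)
The plan is to construct the stable orientation incrementally over $O(\Delta)$ phases, invoking the token dropping algorithm of \Cref{thm: ballsruntime} once per phase as a black-box subroutine. Throughout the execution, the algorithm maintains a partial orientation on a growing edge set $E' \subseteq E$, together with the invariant that this partial orientation is stable when indegrees are computed using only the edges in $E'$.

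In each phase the algorithm cautiously extends $E'$: every node with an incident unoriented edge selects one such edge and orients it toward a neighbor of currently minimum indegree, using identifiers to break ties so that each node gains at most one new inward edge per phase. This careful extension ensures that the excess introduced at any node is at most one unit of load, matching the paragraph ``New ideas'' from the introduction. To restore stability, we set up a token dropping instance in which each node's layer equals its current indegree in $E'$ (so the height $L$ is at most $\Delta$), each oriented edge in $E'$ is directed from the higher-indegree endpoint toward the lower-indegree one, and a token is placed on each overloaded node. Moving a token downward along an edge in this instance corresponds to flipping that edge in the actual orientation; by the token dropping rules each underlying edge is flipped at most once per phase, and the game terminates precisely when the partial orientation on $E'$ is stable again.

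By \Cref{thm: ballsruntime}, each phase's token dropping instance is solved in $O(L \cdot \Delta^2) = O(\Delta^3)$ rounds. Since each node's indegree grows by at most one per phase and indegrees never exceed $\Delta$, after $O(\Delta)$ phases every edge has been oriented and every edge is happy. The total round complexity is therefore $O(\Delta) \cdot O(\Delta^3) = O(\Delta^4)$, independent of the graph size, as claimed.

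The main obstacle is to show precisely that the token dropping instance captures the edge-flipping dynamics of stable orientation and that its completion restores stability on $E'$. Three structural facts are needed: (i) the layering by current indegrees is consistent, so that adjacent layers in the constructed instance differ by exactly one and the induced layered graph is acyclic, which relies on the pre-phase stability invariant; (ii) the simultaneous orientation choices in a phase introduce at most one excess token per node, which requires care with the distributed tie-breaking rule; and (iii) when no token can be moved, every oriented edge is happy, which is the converse direction and establishes correctness. Once these structural lemmas are in place, the runtime analysis reduces to combining \Cref{thm: ballsruntime} with the phase count.
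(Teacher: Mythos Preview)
Your overall architecture matches the paper's: incrementally orient edges in $O(\Delta)$ phases, keep the partial orientation stable after each phase, and use \Cref{thm: ballsruntime} as a black box per phase. However, three of the technical steps, as you state them, do not go through.

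\textbf{Token dropping instance.} You put \emph{all} oriented edges of $E'$ into the instance and argue that adjacent layers differ by exactly one because the pre-phase orientation is stable. Stability only gives $\indegree(v)\le \indegree(u)+1$ for an edge $(u,v)$; it does not rule out $\indegree(v)<\indegree(u)$, or even $\indegree(v)\ll \indegree(u)$. So many oriented edges do not lie between adjacent layers (and some have no well-defined ``higher/lower'' endpoint at all). The paper resolves this by including in the token-dropping instance \emph{only} the oriented edges whose badness is exactly $1$; these are precisely the edges along which a flip is ever useful, and they automatically satisfy the layer condition. Your obstacle~(i) is not a lemma that can be proved---it is simply false without this restriction.

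\textbf{Extension step.} ``Every node selects one unoriented edge and orients it toward its minimum-indegree neighbor'' does not, by itself, bound the indegree gain of each node by one: several neighbors of $v$ may all pick their edge to $v$. Identifier tie-breaking does not fix this. What is needed is a proposal--accept round (edges propose to their lower-load endpoint; each node accepts at most one), which is what the paper does and what actually yields the ``at most one token per node'' property you need.

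\textbf{Phase count.} The argument ``indegrees grow by at most one and are bounded by $\Delta$, hence $O(\Delta)$ phases'' is a non sequitur: a node's indegree can stagnate for many phases while it still has unoriented incident edges. The correct counting argument (as in \Cref{lemma: numPhases}) is edge-local: whenever an unoriented edge $e=\{u,v\}$ proposes, at least one previously unoriented edge incident to $u$ or $v$ becomes oriented, so $e$ survives at most $2\Delta-1$ phases.

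With these three fixes your plan becomes exactly the paper's proof.
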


\subsection{An overview of our algorithm}

The basic idea behind our algorithm is to start with an unoriented graph and gradually orient the edges until all edges are oriented. We split the execution of our algorithm into \emph{phases} and our goal is to guarantee that in the end of each phase, there are no unhappy directed edges in the graph. Let $G = (V, E)$ be the input graph that is initially unoriented. In the beginning of a phase, each unoriented edge sends a proposal to its endpoint with the smaller load\footnote{For the sake of presentation, it is convenient to think of the edge as the actor for sending a proposal. However, this proposal is easy to implement in the node-centered view as well.}, breaking ties arbitrarily. Each node $u$, that receives a proposal, will accept exactly one of the proposals. When a proposal is accepted by node $u$, we will orient the corresponding edge towards $u$---however, before doing so, we make a preparation step on the graph induced by the already oriented edges in order to avoid creating any unhappy edges due to the new orientations. This preparation step is where we apply the token dropping game as a black box.

\subsection{Utilizing the token dropping game}

The \emph{badness} of a directed edge $(u, v)$ is defined as $\indegree(v) - \indegree(u)$. An important observation is that as soon as the badness of an edge $(u, v)$ is strictly larger than $1$, then by flipping the edge, the badness of the edge is reduced and the edge becomes happy. Furthermore, if the badness is at most $1$, then the edge is happy. Let us suppose that every directed edge is happy in the beginning of a phase. Then, we know that the maximum badness for any edge in the graph is at most $1$. We create a token dropping instance by including all directed edges that have badness exactly $1$. All nodes are added into the token dropping instance, even if they end up isolated in the token dropping game. The nodes are assigned to levels according to their current load. In addition, for each unoriented edge selected to be oriented towards node $u$ in the current phase, we add a token to node $u$. We note that these unoriented edges are not included in the token dropping instance (in this phase).

Then, we run the token dropping algorithm on the instance we created and obtain a set of traversals. We re-orient all the edges according to the traversals or, in other words, flip every edge present in the traversals. We show in \Cref{lemma: badness} that after flipping the edges, we have badness bounded by $1$ and hence, we have our invariant that allows us to proceed to the next phase. In \Cref{lemma: gameheight} we show that the created token dropping instance is valid and has height bounded by $\Delta$. Finally, in \Cref{lemma: numPhases}, we give a bound on the number of phases we require and are ready to prove \Cref{theorem: stable}.

\begin{lemma}\label{lemma: gameheight}
    The created token dropping instance is valid and the height of the game is at most~$\Delta$.
\end{lemma}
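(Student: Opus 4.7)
The plan is to unwind the definition of a valid token dropping instance and verify each requirement in turn, then bound the height. Recall from \Cref{sec: token-dropping} that a valid instance needs (i) no directed cycles, (ii) at most one token per node, and (iii) $\ell(v) = \ell(u)+1$ whenever $v$ is a parent of $u$, i.e., whenever $(u,v)$ is an edge of the instance.

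For (ii), I would simply note that the tokens are in one-to-one correspondence with the unoriented edges selected in this phase to be oriented toward a particular endpoint. Since by construction each node accepts at most one proposal per phase, no node receives more than one token.

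For (iii), I would exploit the definition that the instance contains precisely the currently oriented edges of badness exactly $1$, i.e., edges $(u,v)$ with $\indegree(v) - \indegree(u) = 1$. Because the levels are assigned by current load, $\ell(x) = \indegree(x)$, this equality is exactly the level constraint, and it holds by construction on every included edge. Requirement (i) then follows for free: every such edge goes from a node of strictly smaller indegree to a node of strictly larger indegree, so a directed cycle would force the indegree to strictly increase all the way around it, which is impossible.

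For the height bound, the key observation is that a node's indegree is bounded by its degree in $G$, which is at most $\Delta$. Hence $\ell(v) = \indegree(v) \le \Delta$ for every node, and the height of the game is at most $\Delta$. I do not anticipate a real obstacle here: the lemma is essentially a bookkeeping check that the construction matches the interface of the token dropping game. The only point deserving care is that the phase-start invariant ``every oriented edge is happy'' (i.e., badness at most $1$) must be in force when the instance is built; this is exactly what ensures the instance contains no edges of badness greater than $1$ that would break the unit-level-gap requirement, and it is the invariant that the subsequent \Cref{lemma: badness} will re-establish for the next phase.
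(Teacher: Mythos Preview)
Your proposal is correct and follows essentially the same approach as the paper: both argue that accepting at most one proposal per phase gives at most one token per node, that badness exactly $1$ forces each included edge to go from layer $i$ to layer $i+1$, and that levels equal loads which are bounded by~$\Delta$. You are slightly more thorough in explicitly deriving acyclicity from the strictly increasing levels along edges (the paper leaves this implicit in the layer structure), but the substance is the same.
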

\begin{proof}
    Since the nodes are assigned to levels according to their loads, the bound for the height of the game follows from the fact that the maximum load of a node is bounded by the maximum degree. Every node can have at most one token since in each phase each node accepts at most one proposal. Finally, all edges have badness exactly one, which implies that they go from a node in layer $i$ to some other node in layer $i +1$.
\end{proof}

\begin{lemma}\label{lemma: matchCorrect}
    Consider a node $v$ and some phase $p$. The load of $v$ increases by $1$ in phase $p$ if and only if $v$ is the destination of a token in the token dropping game created in phase $p$. Otherwise, the load of $v$ does not change in phase $p$.
\end{lemma}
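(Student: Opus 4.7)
The plan is a direct bookkeeping argument: the load $\indegree(v)$ can change in phase $p$ only through two mechanisms, and I will track their net contribution for each role $v$ can play in the token dropping game.

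First I would enumerate the mechanisms. Only two events in phase $p$ touch $\indegree(v)$: (i) an unoriented edge is newly oriented toward $v$, which happens exactly once if $v$ accepts a (necessarily unique) proposal, and (ii) an already-oriented edge incident to $v$ is flipped because it appears in some token traversal. Nothing else in the phase modifies $v$'s load. So I only need to sum the contributions from (i) and (ii).

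Next I would localize the contributions of a single traversal $p_s = (v_1, \ldots, v_d)$. Every consecutive pair has an edge of badness $1$ oriented from the lower-level endpoint to the higher-level endpoint, i.e., edge $(v_i, v_{i-1})$ exists in the current orientation; when the token crosses it, we flip it to $(v_{i-1}, v_i)$, which drops $\indegree(v_{i-1})$ by one and raises $\indegree(v_i)$ by one. Telescoping over $i = 2,\ldots,d$, the per-traversal net effect on indegrees is $-1$ at the source $v_1$, $0$ at each internal node, and $+1$ at the destination $v_d$; in the trivial case $d=1$ the traversal contributes nothing through flips. Summing across all traversals, and noting that a node can appear in many traversals only as an internal node (each such appearance contributing $0$) since destinations are unique by rule~(2) and sources correspond to accepted proposals of which there is at most one per node, the net flip contribution at $v$ is $+1$ if $v$ is a destination, $-1$ if $v$ is a source but not a destination, and $0$ otherwise.

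Finally I would combine (i) and (ii) by case analysis on whether $v$ accepted a proposal. If $v$ accepted a proposal, then $v$ is the source of some traversal (the one seeded at its new token); mechanism (i) contributes $+1$, and mechanism (ii) contributes $-1$ (if the token moves away and $v$ is not also a destination) or $0$ (if the token stays, so $v$ is its own destination) or $-1+1=0$ cancellation plus the inner $+1$ in the source-and-separate-destination case. If $v$ did not accept a proposal, then mechanism (i) contributes $0$, and (ii) contributes $+1$ exactly when $v$ is the destination of some traversal. Assembling the cases yields: the net change at $v$ is $+1$ when $v$ is the destination of a traversal in phase $p$ and $0$ otherwise, which is the lemma.

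The one subtlety to handle carefully, and the only place the argument could slip, is the scenario in which $v$ is simultaneously the source of its own token and the destination of another token that migrates into $v$ after $v$'s token has left; rule~(2) guarantees the destination role is unique, but the source and destination roles can coexist at $v$, and the bookkeeping $+1-1+1 = +1$ must be checked against the claim that $v$ is indeed a destination in that case, which it is. Once this case is settled, the rest reduces to the telescoping computation above.
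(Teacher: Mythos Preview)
Your proof is correct and follows essentially the same approach as the paper's own argument: both track the two mechanisms (newly oriented edge vs.\ flipped edges) and perform a case analysis on whether $v$ is a source, a destination, both, or neither of a traversal. Your explicit telescoping formulation is a slightly cleaner way to justify the per-traversal net contributions than the paper's more narrative case split, but the underlying reasoning is the same.
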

\begin{proof}
    According to the design of our protocol, the orientations of all the edges contained in the traversals are flipped and hence, flipping the edges will not affect the load of nodes that are not the endpoints of a traversal. Consider now the case that $v$ is a starting point of a traversal but not an endpoint of any traversal. Flipping the edges decreases the load of $v$ by $1$, but directing the undirected edge in the end of the phase cancels out the decrease. If $v$ is the endpoint of a traversal, we need to consider three cases. First, if the traversal corresponds to the token staying still, then no edges are flipped and orienting the undirected edge will bring an increase of $1$ in the load. Second, if $v$ is an endpoint of a traversal and also a starting point of one (and we are not in the previous case), then the flipped edges cancel each other out and the orienting of the undirected edge will increase the load by $1$. Finally, if $v$ is not the starting point but is an endpoint of a traversal, then turning the edges will increase the load by one. That covers all cases, yielding the lemma.
\end{proof}

\begin{lemma}\label{lemma: badness}
    In the end of a phase, there are no directed edges with badness larger than $1$.
\end{lemma}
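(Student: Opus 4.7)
The plan is to fix an arbitrary edge that is directed at the end of the phase and bound its badness by $1$ via casework on (i) whether the edge was unoriented or already oriented at the start of the phase, and (ii) whether it was flipped during the phase. Throughout, for a node $w$ let $d_w$ and $d'_w$ denote its load at the start and end of the phase; by \Cref{lemma: matchCorrect}, $d'_w - d_w \in \{0,1\}$, with value $1$ iff $w$ is the destination of some traversal.

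For an edge that was unoriented at the start and ends up oriented as $(y,u)$, the proposal rule tells me that $d_u$ is at most the load of the other endpoint $y$. Since $u$ accepted the proposal and is therefore a starting point of a traversal, combining $d'_u \le d_u + 1$ with $d'_y \ge d_y \ge d_u$ gives badness at most $1$. For an edge that was already oriented $(u,v)$ with initial badness at most $0$, it is not in the token dropping instance (only badness-$1$ edges are), hence not flipped, and the bounds $d'_v \le d_v + 1$, $d'_u \ge d_u$ keep its badness $\le 1$. For an edge that was oriented $(u,v)$ with initial badness exactly $1$ and was flipped during the phase, it becomes $(v,u)$; since $d_v = d_u + 1$ and $d'_u - d'_v \le d_u + 1 - d_v = 0$, the new badness is at most $0$.

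The crux of the argument, and the step I expect to be the main obstacle, is the remaining case: an edge $(u,v)$ with initial badness $1$ that sits in the token dropping instance but is not part of any traversal. Here I would invoke the maximality condition (rule~3 of the token dropping objective): in the game $v$ is a parent of $u$, and if $v$ is a destination then either the edge $(u,v)$ is consumed by a traversal (contrary to our case assumption) or $u$ is itself a destination. Thus $d'_v - d_v = d'_u - d_u$, so the badness is preserved at $1$; if instead $v$ is not a destination then $d'_v = d_v$ and $d'_u \ge d_u$, so the badness only decreases. Putting the cases together establishes that no edge has badness greater than $1$ at the end of the phase. The main subtlety is noticing that the matchCorrect bookkeeping on its own is not enough for this case, and that it must be paired with the maximality clause of the token dropping output to conclude that the loads of $u$ and $v$ move in lockstep.
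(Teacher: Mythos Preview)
Your proof is correct and takes essentially the same approach as the paper's: both arguments rely on \Cref{lemma: matchCorrect} to bound load changes by $\{0,1\}$ per phase and then use the maximality rule (rule~3) of the token-dropping output to handle the critical case of a badness-$1$ edge that lies in the game but is not consumed. The paper organizes the inductive step as a contradiction argument on a hypothetical badness-$2$ edge, whereas you do a direct four-way case split; the content is the same. One small presentational point: your casework on ``initial badness $\le 0$'' versus ``initial badness $=1$'' silently assumes the inductive hypothesis that every oriented edge has badness $\le 1$ at the start of the phase, so you should state the induction explicitly (the base case, with no oriented edges, is trivial).
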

\begin{proof}
    We approach the proof by induction. For the first phase, the claim follows by observing that for each node $v$, at most one edge is directed towards $v$. Consider now some phase $p$ where the badness of each directed edge is initially bounded by $1$. Suppose for a contradiction that in the end of the phase, there is an edge $e = (u, v)$ with badness at least $2$. First, consider the case that $e$ was unoriented in the beginning of phase $p$. Since $e$ proposes the endpoint with the smaller load, it must be the case that $v$ has the smaller load in the beginning of phase $p$. According to \Cref{lemma: matchCorrect}, the load of a node cannot decrease and can increase by at most $1$ yielding a contradiction.

    Then, consider the case that edge $e$ was oriented in some previous phase for the first time. According to the induction assumption, the badness of edge $e$ was at most $1$ in the beginning of phase $p$. According to \Cref{lemma: matchCorrect}, the load of a node can only increase by one per phase and hence, it must be the case that $e$ was oriented from $u$ to $v$ in the beginning of phase $p$. Furthermore, since the badness increased from $1$ to $2$, we have that node $v$ was an endpoint of a traversal and that $u$ was unoccupied in the end of the token dropping game in phase $p$. Also, $(u, v)$ was not traversed, because it is still oriented towards $v$ in the end of phase $p$ (and would have been flipped otherwise). However, due to the design of our protocol, edge $(u, v)$ was a part of the token dropping game and hence, node $u$ not having a token violates the maximality of the token dropping game. This completes the inductive step and the claim follows by induction.
\end{proof}

\begin{lemma}\label{lemma: numPhases}
    The number of phases is $O(\Delta)$.
\end{lemma}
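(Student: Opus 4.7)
The plan is to fix an arbitrary edge $e = \{u, v\}$ and bound the number of phases during which $e$ remains unoriented; this immediately yields the $O(\Delta)$ bound on the total number of phases, since the algorithm halts as soon as every edge is oriented.

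The key observation I would record first is a \emph{permanence} property of orientations: once an edge is oriented, it stays oriented throughout the rest of the execution. Token dropping may flip an oriented edge (reversing its direction), but it never converts an oriented edge back into an unoriented one. Since only unoriented edges send proposals, each edge can be proposed---and hence accepted---at most once throughout the entire run; in particular, any node $w$ accepts at most $\deg(w)$ proposals in total over all phases.

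With permanence in hand, I would analyse the phases in which $e$ is still unoriented. In any such phase $e$ proposes to the endpoint of smaller current load (with ties broken arbitrarily); call this endpoint the \emph{target} of $e$ in that phase. The target then receives at least $e$'s proposal and must accept exactly one proposal. If it accepts $e$'s proposal, $e$ becomes oriented. Otherwise, it accepts a proposal from some other incident edge $e' \neq e$, and $e'$ becomes permanently oriented. Partition the ``unoriented phases of $e$'' by the identity of the target: let $P_u$ (resp.\ $P_v$) denote the number of such phases where $u$ (resp.\ $v$) is the target. By permanence, in each of the $P_u$ phases $u$ accepts a distinct non-$e$ incident edge, so $P_u \leq \deg(u) - 1$; symmetrically $P_v \leq \deg(v) - 1$. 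Hence $e$ can be unoriented for at most $P_u + P_v \leq \deg(u) + \deg(v) - 2 \leq 2(\Delta - 1)$ phases, so $e$ is oriented no later than phase $2\Delta - 1$.

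Applying this to every edge, the algorithm terminates within $2\Delta - 1 = O(\Delta)$ phases. The argument is essentially a counting one and I do not foresee a real obstacle; the only delicate point is the permanence claim, which requires checking that neither the flipping step of the token-dropping subroutine nor the end-of-phase re-orientation can turn an oriented edge into an unoriented one. Once this is verified directly from the algorithm description in \Cref{sec: proposal}, the rest of the bookkeeping is immediate.
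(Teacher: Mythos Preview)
Your proof is correct and follows essentially the same counting argument as the paper: fix an edge $e=\{u,v\}$, note that in every phase where $e$ remains unoriented the proposed-to endpoint accepts (and permanently orients) some other incident edge, and conclude that $e$ survives at most $2\Delta-2$ such phases. The paper's version is terser and does not spell out the permanence property or the $P_u/P_v$ split, but the idea and the resulting bound are identical.
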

\begin{proof}
    Consider an arbitrary undirected edge $e = \{u, v\}$. In every phase, $e$ sends a proposal to either $u$ or $v$. Suppose w.l.o.g., that $u$ receives the proposal. Since $u$ accepts at least one proposal it receives, we have that at least one unoriented edge incident on $u$ becomes oriented. Hence, after $O(\Delta)$ phases, edge $e$ has to be oriented since in each phase before $e$ becomes oriented, at least one of the $2 \Delta - 2$ edges incident to $u$ or $v$ and different from $e$ has to change from unoriented to oriented.
\end{proof}

\begin{proof}[Proof of \Cref{theorem: stable}]
    By \Cref{lemma: gameheight}, the token dropping instances we create in every phase are of height at most $\Delta$. Hence, by \Cref{thm: ballsruntime}, we get that it takes $O(\Delta^3)$ rounds to execute one phase of our algorithm. Combining this with \Cref{lemma: numPhases}, we get a runtime bound of $O(\Delta^4)$ communication rounds. The correctness of the algorithm is given by \Cref{lemma: badness}.
\end{proof}

\section{Linear-in-\texorpdfstring{$\boldsymbol\Delta$}{} lower bound for stable orientations}\label{sec:lowerorient}

In this section, we show that finding a stable orientation takes $\Omega(\Delta)$ rounds.
We define that a perfect $d$-regular tree of depth $k$ is a rooted tree, where (1) every non-leaf node has degree $d$ and (2) every leaf node is at distance $k$ from the root node. The \emph{height} $h(v)$ of a node $v$ is its distance to the closest leaf node; if $v$ is a leaf, then $h(v) = 0$.

\begin{lemma}\label{lemma:stable-height}
    Let $G = (V,E)$ be a perfect $d$-regular tree. In any stable orientation, $\indegree(v) \le h(v) + 1$.
\end{lemma}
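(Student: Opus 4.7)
The plan is to proceed by induction on the height $h(v)$, exploiting the happiness condition for edges between $v$ and its children.

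For the base case $h(v) = 0$, the node $v$ is a leaf and has only a single incident edge, so $\indegree(v) \le 1 = h(v) + 1$ holds trivially.

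For the inductive step with $h(v) = h \ge 1$, I would observe that every child $u$ of $v$ has height exactly $h - 1$ (since the tree is perfect $d$-regular), so by the induction hypothesis $\indegree(u) \le h$. I then split into two cases based on whether any child edge is incoming at $v$. If some edge $e = (u,v)$ from a child $u$ is oriented toward $v$, the happiness of $e$ combined with the inductive bound on $u$ gives $\indegree(v) \le \indegree(u) + 1 \le h + 1$. Otherwise every edge between $v$ and its children is oriented away from $v$, so the only possible contribution to $\indegree(v)$ is from the unique edge to $v$'s parent (if $v$ is not the root), yielding $\indegree(v) \le 1 \le h + 1$.

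The argument is purely structural and no routine calculation is required; the happiness condition of a single well-chosen edge is enough in the first case, and the tree structure forces a trivial bound in the second. The only subtlety worth flagging is making sure the case analysis also handles the root (which has no parent edge): this is automatic, since if $v$ is the root and all child edges point outward, then $\indegree(v) = 0 \le h + 1$, while otherwise the first case applies.
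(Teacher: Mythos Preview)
Your proof is correct and follows essentially the same approach as the paper: induction on the height, using the happiness condition on an edge from a child to bound $\indegree(v)$ via the inductive bound on the child. The only cosmetic difference is that the paper phrases the inductive step as a proof by contradiction (assuming $\indegree(v) \ge h+2$ forces some child edge to be incoming, which is then unhappy), whereas you give the equivalent direct case split; your explicit handling of the root and of the ``no incoming child edge'' case is a bit more careful than the paper's presentation.
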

\begin{proof}
    Let $\mathcal{C}(v)=\{ u : h(u) = h(v)-1 \text{ and } (u,v) \in E\}$ denote the children of $v$. Consider an arbitrary stable orientation of $G$. We show by induction on the height that $\indegree(u) \le h(v)$ for $u \in \mathcal{C}(v)$. The claim follows from this. Consider nodes at height $i$. The base case $i=0$ is trivial, as leaves have no children. Let $v$ be a node with $h(v)=i$. Suppose $\indegree(v)\ge i+2$. As $v$ has only one parent, at least one child $u\in\mathcal{C}(v)$ must have its edge pointed at $v$. By the induction assumption we know that for every child $u\in\mathcal{C}(v)$ it holds that $\indegree(u)\le i$. This implies that the edge $e=(u,v)$ is unhappy, and thus, the orientation of $G$ is not stable.
\end{proof}

\begin{lemma}\label{lemma:heavy-nodes}
    Let $G = (V,E)$ be an oriented $d$-regular graph. Then there exists a node $v \in V$ such that $\indegree(v) \ge \lceil d/2 \rceil$.
\end{lemma}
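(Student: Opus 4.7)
The plan is to use a straightforward averaging argument over indegrees. Since $G$ is $d$-regular, the number of edges equals $|E| = d|V|/2$. Each edge contributes exactly one unit to the sum $\sum_{v \in V} \indegree(v)$, so this sum equals $d|V|/2$, meaning the average indegree over $V$ is exactly $d/2$.

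From this, I would conclude that at least one node $v \in V$ has $\indegree(v) \ge d/2$. Since indegrees are nonnegative integers, this inequality upgrades to $\indegree(v) \ge \lceil d/2 \rceil$, as desired.

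There is no real obstacle here; the only subtlety is remembering that a pigeonhole/averaging bound on an integer quantity can be rounded up to the ceiling of the average. The argument works for any orientation of any $d$-regular graph and does not require $G$ to be connected or finite in any nontrivial way beyond the standard assumption that $|V|$ is finite (or that we restrict attention to a finite subgraph in the infinite case, which the paper's model supports via the \local argument that follows).
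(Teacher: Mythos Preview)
Your proof is correct and is essentially the same as the paper's: both compute $\sum_{v} \indegree(v) = |E| = nd/2$ and use integrality to get the ceiling. The paper merely phrases the averaging step as a contradiction (assuming every indegree is at most $\lceil d/2 \rceil - 1 < d/2$ and deriving $\sum_v \indegree(v) < nd/2$), but the underlying argument is identical to yours.
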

\begin{proof}
    For the sake of contradiction, suppose the claim does not hold. The sum over all $n$ nodes will then yield a result strictly smaller than $nd/2$. However, in any $d$-regular graph the number of edges is given by $|E|=nd/2$. Thus, we obtain a contradiction as
    \[
    nd/2 = |E| = \sum_{v \in V} \indegree(v) < nd/2.\qedhere
    \]
\end{proof}

\begin{theorem}\label{thm:orientation-lb}
    Any algorithm that finds a stable orientation has a running time of $\Omega(\Delta)$ rounds.
\end{theorem}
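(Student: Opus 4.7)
The plan is to combine \Cref{lemma:stable-height} and \Cref{lemma:heavy-nodes} through a standard indistinguishability argument in the \local model. Suppose, toward contradiction, that some algorithm $\mathcal{A}$ computes a stable orientation in $t$ rounds with $t < \lceil \Delta/2 \rceil - 2$.

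First, I would pick a $\Delta$-regular graph $G$ of girth greater than $2t+1$; such graphs exist for every $\Delta$ by standard constructions once $n$ is large enough. Running $\mathcal{A}$ on $G$ produces a stable orientation, so by \Cref{lemma:heavy-nodes} there is a node $v^\star \in V(G)$ with $\indegree(v^\star) \ge \lceil \Delta/2 \rceil$. The girth assumption ensures that the $t$-neighborhood $B_t^G(v^\star)$ is a tree, and more specifically that it is isomorphic to the radius-$t$ ball around the root of the infinite $\Delta$-regular tree.

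Next, I would let $T$ be a perfect $\Delta$-regular tree of depth $t+1$ and let $u^\star$ be its root, so that $h(u^\star) = t+1$. Because every leaf of $T$ sits at distance $t+1 > t$ from $u^\star$, every node inside $B_t^T(u^\star)$ has degree $\Delta$, and this ball is isomorphic (as an unlabeled graph) to $B_t^G(v^\star)$. By relabeling the identifiers of $T$ so that those inside $B_t^T(u^\star)$ coincide with those inside $B_t^G(v^\star)$ under this isomorphism (and picking the remaining identifiers of $T$ distinct and outside that set), I can force $v^\star$ in $G$ and $u^\star$ in $T$ to be indistinguishable to $\mathcal{A}$ for $t$ rounds. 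Hence $\mathcal{A}$ orients the edges incident to $u^\star$ in $T$ exactly as it orients those incident to $v^\star$ in $G$, which gives $\indegree(u^\star) = \indegree(v^\star) \ge \lceil \Delta/2 \rceil$.

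On the other hand, since $\mathcal{A}$'s output on $T$ is also a stable orientation, \Cref{lemma:stable-height} yields $\indegree(u^\star) \le h(u^\star) + 1 = t + 2$. Chaining the two inequalities gives $\lceil \Delta/2 \rceil \le t+2$, contradicting the assumption on $t$ and therefore proving $t = \Omega(\Delta)$. The only delicate step is the indistinguishability claim: one has to verify that the two $t$-balls really do agree as ID-labeled structures out to the boundary at distance exactly $t$, and that the identifier universe is large enough to carry out the relabeling in $T$ consistently. Everything else is a short invocation of the two preceding lemmas.
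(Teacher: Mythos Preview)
Your proposal is correct and follows essentially the same indistinguishability argument as the paper: both combine \Cref{lemma:heavy-nodes} on a high-girth $\Delta$-regular graph with \Cref{lemma:stable-height} on a perfect $\Delta$-regular tree to derive a contradiction. The only cosmetic difference is that you place the comparison node at the root of a depth-$(t{+}1)$ tree, whereas the paper picks an interior node of height $\lceil \Delta/2 \rceil - 2$ in a deeper tree; the resulting inequalities and the conclusion are the same.
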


\begin{proof}
    Fix $\Delta$ and suppose there exists an algorithm $\mathcal{A}$ that outputs a stable orientation in $t \le \Delta/2-3$ rounds. Fix a $\Delta$-regular graph $G_1 = (V_1,E_1)$ with girth at least $\Delta+1$; for sufficiently large $n$ such graphs exist. Consider the orientation produced by $\mathcal{A}$ in $G_1$. By Lemma~\ref{lemma:heavy-nodes}, there exists some $v$ that has $\indegree(v) \ge \lceil \Delta/2\rceil$ in this orientation.

    Next, let $G_2 = (V_2,E_2)$ be a perfect $\Delta$-regular tree of depth $\Delta+1$. Pick a node $v' \in V_2$ such that $h(v') =\lceil \Delta/2\rceil-2$. Let $G[v,t]$ denote the $t$-radius neighborhood of node $v$ in graph $G$. Clearly, $G_1[v,t]$ and $G_2[v',t]$ are isomorphic, as the $t$-radius neighborhoods of $v$ and $v'$ are $\Delta$-regular trees and indistinguishable. Hence, $\mathcal{A}$ produces the same output for $v$ and $v'$, i.e., $\indegree(v) = \indegree(v')$. By Lemma~\ref{lemma:stable-height}, any orientation output by $\mathcal{A}$ in $G_2$ satisfies $\indegree(v') \le h(v')+1$. Thus, we have that
    \[
    \lceil\Delta/2\rceil \le \indegree(v) = \indegree(v') \le h(v') +1 = \lceil\Delta/2\rceil-1,
    \]
    which is a contradiction, thus yielding the claim.
\end{proof}

\section{Stable assignments}\label{sec: stable-ass}

In this section, we study the stable assignment problem. Recall that in this problem, we have customers on one side of a bipartition and servers on the other and the task of the customers is to choose exactly one server such that the load of the server is minimized. The selection of servers is stable if no customer has an incentive to change their choice. Also, recall that the stable orientation problem is a special case of the stable assignment problem, where each customer has degree $2$. Throughout the section, we denote the maximum customer degree by $C$ and the maximum server degree by $S$, and we use $\Delta = \max \{C,S\}$ to denote the maximum degree in the entire network. We give two results on the stable assignment problem.

First, in \Cref{sec: hypertoken,sec: hyperassignment}, we show that the proposal algorithm and the scheme of gradually orienting edges are robust to higher customer degrees. We interpret the bipartite input graph as a hypergraph, where the customers act as hyperedges. We define the token dropping game on hypergraphs and explain how to adapt the arguments from the case of rank $2$ customers to solve the hypergraph version. Then, we show how to gradually orient hyperedges such that the badness of any hyperedge is at most $1$ in the end of each phase.

Second, in \Cref{sec: hyperlower}, we will study a relaxation of the stable assignment problem. We consider the variant where all loads above a certain threshold are considered equal, and we will show that already with very small thresholds, this problem is at least as hard as maximal matching. Furthermore, we give an algorithm with a strictly faster runtime than what we obtained for the general version.

\subsection{Token dropping for stable assignment}\label{sec: hypertoken}

An oriented hyperedge is an edge where one node has the special role of the \emph{head} of the edge. The other nodes in the hyperedge are oriented towards this edge, i.e., serve the role of the tail of the hyperedge. We generalize the token dropping game by adapting all definitions and rules in the natural way. In particular:

For each hyperedge $e = \{ v_1, \dots, v_i \}$ with head $v_1$, we have \[\ell(v_1) = \min\{ \ell(v_2), \dots, \ell(v_i) \} + 1.\] For two endpoints $u, v$ of a hyperedge $e$, we say that $u$ is a parent of $v$  and $v$ a child of $u$ (\emph{in hyperedge $e$}) if $u$ is the head of $e$, and $\ell(u) = \ell(v) + 1$. A token can only be passed by the head of some hyperedge to one of its children in the hyperedge---analogously to before, this process includes that the hyperedge is consumed. The three rules of (hyper)edge-disjoint traversals, unique destinations, and maximal traversals hold analogously.

\paragraph{The proposal algorithm.}

Similarly to the case of rank $2$, unoccupied nodes propose to a parent with a token, and occupied nodes pass a token to a child that made a proposal (to the node). The proofs of \Cref{lemma: activenode,lemma: correctness,lemma: activerounds} can be adapted in a straight-forward manner. To see why a node can be active for at most $O(S^2)$ rounds, as promised in \Cref{lemma: activerounds}, one needs to observe that each hyperedge has only one head and that the whole hyperedge is consumed when a token is passed through it. Hence, each parent (of which there are at most $S$) needs to be proposed to at most $S$ times. By the adapted lemmas, we obtain the same result for the hypergraph setting as for the case of customers of degree $2$.

\begin{theorem}\label{thm: hyperballsruntime}
    There is an algorithm that solves the token dropping game in $O(L \cdot S^2)$ rounds, where $L$ is the height of the game.
\end{theorem}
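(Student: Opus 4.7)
The plan is to lift the proof of Theorem~\ref{thm: ballsruntime} to hypergraphs by re-establishing the hypergraph analogs of \Cref{lemma: correctness,lemma: activerounds,lemma: activenode} and then combining them exactly as in the rank-$2$ case. The algorithm I would analyze is the one described just above the theorem statement: each unoccupied active node proposes to some parent that currently holds a token, and each occupied head that receives at least one proposal selects one proposing child and passes its token through the corresponding hyperedge, which is consumed in the process. A node terminates under the same conditions as before, suitably reinterpreted: an occupied head with no remaining children, or an unoccupied non-head with no remaining parents.

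Correctness follows almost verbatim: hyperedge-disjointness of traversals holds because each hyperedge is marked consumed the first time a token passes through it and is then removed from the instance, and maximality holds because an occupied head cannot terminate while any of its children remains unoccupied and connected by an un-consumed hyperedge. The main technical adaptation is the analog of \Cref{lemma: activerounds}, for which I would use the following counting: in every round in which a node $u$ is active and unoccupied, $u$ proposes to some parent $v$, $v$ accepts at least one proposal, and $v$ passes its token through some incident hyperedge $e$, consuming $e$ entirely. Since $u$ has at most $S$ parents (there is at most one head per hyperedge containing $u$, and $u$ is incident to at most $S$ hyperedges) and each such parent is incident to at most $S$ hyperedges, the total number of such rounds is bounded by $S \cdot S = O(S^2)$, after which no parent of $u$ can hold a token.

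For the analog of \Cref{lemma: activenode}, I would redefine the tail of a traversal as in \Cref{def: extended}, with ``the last token $v_i$ passes down'' now understood as the last hyperedge through which $v_i$ passes a token, and $v_{i+1}$ being the child in that hyperedge to which the token is sent. The same downward induction then shows that if every node of the tail were occupied at some time $t$, the destination $v_d$ would never pass its token, contradicting the fact that $v_d$ is a destination. Combining the three adapted lemmas yields the claim: for any token $s$ not yet at its destination, one of the at most $L$ nodes on $p^*_s$ is active and unoccupied, and across the whole execution this state can be entered only $O(S^2)$ times per node, so all tokens arrive within $O(L \cdot S^2)$ rounds and $O(L)$ further rounds suffice for termination. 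The part I expect to require the most care is the hyperedge-local nature of the parent/child relation --- the same ordered pair may play different roles in different hyperedges --- so I will need to verify that the $S \times S$ bookkeeping in the activerounds step correctly charges each hyperedge consumption without double counting.
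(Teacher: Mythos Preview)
Your proposal is correct and follows essentially the same approach as the paper: the paper's proof of \Cref{thm: hyperballsruntime} is just a short sketch noting that \Cref{lemma: correctness,lemma: activerounds,lemma: activenode} carry over verbatim, with the only nontrivial adaptation being exactly the $S \times S$ counting you give for \Cref{lemma: activerounds} (each hyperedge has a unique head and is consumed entirely when traversed, so each of the at most $S$ parents of $u$ is charged at most $S$ times). Your caution about the hyperedge-local parent/child relation is warranted but harmless here, since the charging is ultimately to hyperedges incident to parents rather than to parent nodes themselves.
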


\subsection{Finding a stable assignment}\label{sec: hyperassignment}

\paragraph{Gradually orienting edges.}

Similarly to the case of rank $2$, our plan is to divide the execution of our algorithm into phases and to guarantee a maximum badness of at most $1$ at the end of every phase, where the badness of a hyperedge $e = \{ v_1, \dots, v_i \}$ with head $v_1$ is defined as $\indegree(v_1) - \min\{ \indegree(v_2), \dots, \indegree(v_i) \}$. In each phase, every unoriented hyperedge makes a proposal to the node with the smallest load and exactly one proposal is accepted by any node that received at least one proposal. Then the algorithm proceeds as described in \Cref{sec: findStable}, where flipping an edge now corresponds to changing the head of a hyperedge: if, in the token dropping game, a token was passed from node $u$ to node $v$ via hyperedge $e$, then the head of $e$ changes from $u$ to $v$.

As before, the token dropping instance is created from hyperedges with badness exactly $1$ and all nodes and the tokens are added to the nodes that accepted a proposal. Now, all the statements from \Cref{sec: findStable} can be generalized in a straightforward manner. For the generalization of \Cref{lemma: numPhases}, we obtain a slightly worse bound than for \Cref{lemma: numPhases}, as shown in \Cref{lemma: hyperphases}. Then, \Cref{theorem: hyper} follows from \Cref{thm: hyperballsruntime} and \Cref{lemma: hyperphases}.

\begin{lemma}\label{lemma: hyperphases}
    The number of phases is at most $O(C \cdot S)$.
\end{lemma}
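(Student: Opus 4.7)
The plan is to mirror the proof of \Cref{lemma: numPhases}, replacing the degree-$2$ counting by a hyperedge-aware count that accounts for there now being up to $C$ endpoints per customer. First I would fix an arbitrary hyperedge $e = \{v_1, \dots, v_k\}$ with $k \le C$ that is still undirected at the start of some phase $p$, and establish the key structural observation that once a hyperedge is oriented it stays oriented in all future phases. The token-dropping subroutine only changes the head of an already-oriented hyperedge (by the construction in \Cref{sec: hyperassignment}, ``flipping'' an oriented hyperedge is reinterpreted as moving its head), so the set of oriented hyperedges is monotonically non-decreasing across phases.

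Next I would argue that each phase in which $e$ is still undirected strictly enlarges this set within the neighborhood of $e$. In such a phase, $e$ proposes to the endpoint $v_i$ of currently smallest load, and by definition of the proposal algorithm $v_i$ accepts at least one of the proposals it receives; consequently at least one previously-undirected hyperedge incident to $v_i$ becomes oriented. If the accepted proposal was from $e$, then $e$ itself is now oriented; otherwise some other hyperedge sharing an endpoint with $e$ is. Either way at least one previously-undirected element of the set
\[
H(e) = \{\, f : f \text{ is a hyperedge incident to some } v_i,\ 1 \le i \le k \,\}
\]
is oriented in phase $p$.

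To finish, I would bound $|H(e)|$ by pigeonhole: there are at most $C$ endpoints of $e$ and each server lies in at most $S$ hyperedges, so $|H(e)| \le C \cdot S$. Since $e \in H(e)$ and each phase in which $e$ remains undirected removes at least one hyperedge from the undirected portion of $H(e)$, while oriented hyperedges never revert to undirected, the hyperedge $e$ must be oriented within $O(C \cdot S)$ phases. Applying this to an arbitrary $e$ gives the claimed bound on the total number of phases.

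I do not anticipate a substantive obstacle. The only non-routine ingredient is the monotonicity observation that flips produced by the token-dropping subroutine never turn an oriented hyperedge back into an undirected one; once this is stated, the remainder is a direct copy of the degree-$2$ argument in \Cref{lemma: numPhases} with the bound $2\Delta - 2$ replaced by $C \cdot S$.
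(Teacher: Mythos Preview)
Your proposal is correct and follows essentially the same argument as the paper: fix an undirected hyperedge $e$, observe that in each phase some still-undirected hyperedge incident to one of $e$'s at most $C$ endpoints becomes oriented, and bound the total number of such hyperedges by $C \cdot S$. You make the monotonicity of the oriented set slightly more explicit than the paper does, but the approach is identical.
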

\begin{proof}
    Consider an arbitrary hyperedge $e \subseteq V$ of a hypergraph $H = (V, E)$. In every phase, if $e$ is not yet oriented, $e$ sends a proposal to one of its nodes. This node must accept a proposal from at least one of its nodes. Since the rank of $e$ is at most $C$ and the maximum degree of a node is at most $S$, there can be at most $C \cdot S$ phases until $e$ is oriented itself, or all edges incident to the nodes of $e$ are oriented, which implies that $e$ becomes oriented in the next phase at the latest.
\end{proof}

\begin{theorem}\label{theorem: hyper}
    There is an algorithm that solves the stable assignment problem in $O(C \cdot S^4)$ communication rounds in the \local model.
\end{theorem}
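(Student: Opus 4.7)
The plan is to assemble the theorem directly from the two workhorse results already developed for the hypergraph setting, in exact analogy with how \Cref{theorem: stable} was obtained from \Cref{thm: ballsruntime} and \Cref{lemma: numPhases}. First I would verify the hypergraph analog of \Cref{lemma: gameheight}: in each phase, nodes are placed on levels according to their current load, loads are bounded above by $S$ (since no node can exceed its server-degree), and every hyperedge included in the instance has badness exactly $1$, so its head sits one level above the minimum level of its other endpoints. This certifies that the instance is a valid input for the generalized token dropping game and has height $L \le S$.

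With the height bound in hand, \Cref{thm: hyperballsruntime} gives the round complexity of a single phase as $O(L \cdot S^2) = O(S^3)$; the surrounding work (unoriented hyperedges propose to the minimum-load endpoint, nodes accept one proposal, the instance is assembled, and heads are updated according to the computed traversals) adds only a constant overhead. Multiplying by the $O(C \cdot S)$ phase bound from \Cref{lemma: hyperphases} delivers the claimed $O(C \cdot S^4)$ round complexity.

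What remains for correctness is the hypergraph analog of the invariant in \Cref{lemma: badness}: no oriented hyperedge ends a phase with badness exceeding $1$. I would prove this by induction on the phase, splitting on whether the hyperedge was oriented this phase (in which case its head is the proposed minimum-load endpoint, so its badness is at most $1$ as in the rank-$2$ proof), oriented earlier and omitted from the current game (its badness was at most $1$ and the loads of its endpoints can only shift in ways that respect \Cref{lemma: matchCorrect}'s hypergraph analog), or oriented earlier and included in the game (in which case a hypothetical badness of $2$ forces a node that is unoccupied at the end of the game but still has a parent holding a token via an unconsumed hyperedge, contradicting the maximality guarantee of the token dropping algorithm).

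The step I expect to be the main obstacle is this last maximality argument in the hypergraph form, because ``consumption'' of a hyperedge is no longer symmetric in the two endpoints: the hyperedge as a whole is deleted when the head passes a token through it, so I need to rule out every way in which some tail node of a badness-$1$ hyperedge could remain both unoccupied and reachable from a token-bearing former head after the game. Once this is handled by tracking which nodes are token destinations and using the hypergraph version of \Cref{lemma: matchCorrect} to verify that at most one endpoint of each hyperedge gains load, the invariant is preserved across all $O(C \cdot S)$ phases and \Cref{theorem: hyper} follows.
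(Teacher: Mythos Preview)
Your proposal is correct and follows essentially the same approach as the paper: the paper derives \Cref{theorem: hyper} directly from \Cref{thm: hyperballsruntime} and \Cref{lemma: hyperphases}, after noting that \Cref{lemma: gameheight}, \Cref{lemma: matchCorrect}, and \Cref{lemma: badness} generalize ``in a straightforward manner'' to the hypergraph setting. Your write-up is simply more explicit about the case analysis in the hypergraph badness invariant and the maximality step, which the paper leaves implicit.
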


\subsection{Relaxations of the stable assignment problem}\label{sec: hyperlower}

An interesting variant of the stable assignment problem is obtained by considering all possible loads above a certain threshold as being the same: after all, if the server has too high of a load, the customer might not care anymore how high the load is exactly, and instead looks for another solution. From a theoretical point of view, we might hope to solve these relaxations faster than the original stable assignment problem; in particular, the linear-in-$\Delta$ lower bound presented in Section \ref{sec:lowerorient} weakens proportionally to the chosen threshold until we remain with a constant lower bound for relaxations with a constant threshold. While it is an intriguing open question how much the stable assignment problem becomes easier by introducing such a threshold, we show in \Cref{thm: hyperlower012} that even in the most relaxed non-trivial case, i.e., if we consider all loads strictly above $1$ as equal, we cannot hope for a better than linear dependency on $\Delta$. Formally, for each $k \geq 2$, the \emph{$k$-bounded stable assignment problem} is defined as the original stable assignment problem with the only difference that customers are only unhappy if they have chosen a server with indegree $\ell$, but also have a neighbor of load at most $\min\{ k, \ell \} - 2$. In particular, in the $2$-bounded stable assignment problem, the only unhappy customers are those that have a neighbor with indegree $0$, but have chosen a server with indegree at least $2$.

\begin{theorem}\label{thm: hyperlower012}
    There is no deterministic algorithm that solves the $2$-bounded stable assignment problem in $o(\Delta + \log n / \log \log n)$ rounds in the \local model.
\end{theorem}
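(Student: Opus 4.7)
The plan is to reduce bipartite maximal matching to the $2$-bounded stable assignment problem, in the same spirit as the reduction used in the proof of \Cref{thm: lowertoken}. Recall that by \cite{Balliu2019}, bipartite maximal matching cannot be solved in $o(\Delta + \log n / \log \log n)$ rounds in the \local model, so it suffices to exhibit a reduction that adds only $O(1)$ rounds of overhead and preserves the underlying graph.

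Given an instance $G = (U \cup V, E)$ of bipartite maximal matching, I would treat each node in $U$ as a customer and each node in $V$ as a server, keeping the bipartite graph exactly as the communication network. One then runs the hypothetical $o(\Delta + \log n / \log \log n)$-round algorithm for the $2$-bounded stable assignment problem. Afterwards, in a single postprocessing round, each server $v$ of load at least $1$ selects locally an arbitrary customer $u_v$ that was assigned to it, and we let $M$ be the set of pairs $\{u_v, v\}$. Since every customer is assigned to exactly one server, no customer can appear as $u_v$ for two different servers, so $M$ is a valid matching of $G$.

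The content of the proof is to show that $M$ is maximal. Consider an arbitrary edge $\{u, v\} \in E$. If $v$ has load at least $1$ in the stable assignment, then $v$ is covered by $M$ by construction. Otherwise $v$ has load $0$; since $v$ is a neighbor of $u$, the $2$-bounded stability condition forbids $u$ from being assigned to a server of load at least $2$, as such an assignment would make $u$ unhappy. Hence the server $w$ to which $u$ is assigned must have load exactly $1$, and then the unique customer assigned to $w$ is $u$ itself, so $u = u_w$ is covered by $M$. Thus every edge of $G$ has at least one endpoint in $M$, and $M$ is maximal.

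The main obstacle, and it is very mild, is to handle customers that happen to have no server neighbors in $G$: such customers cannot be assigned at all, so we simply discard them from the reduction (or attach them to private dummy servers), which does not increase $\Delta$ nor affect whether the produced matching is maximal in $G$. Combining the reduction with the lower bound of \cite{Balliu2019} then yields the claimed $\Omega(\Delta + \log n / \log \log n)$ bound for the $2$-bounded stable assignment problem.
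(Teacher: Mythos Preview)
Your proposal is correct and follows essentially the same approach as the paper: both reduce from bipartite maximal matching by treating $U$ as customers and $V$ as servers, running the $2$-bounded stable assignment algorithm, and then having each loaded server keep one assigned customer as its matching partner. Your maximality argument (every edge has a covered endpoint) is just the contrapositive of the paper's (every unmatched node has all neighbors matched), and your case analysis using the load-$0$ neighbor to force load exactly $1$ at $u$'s server is exactly the key observation the paper uses.
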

\begin{proof}
    Similarly to the proof of Theorem \ref{thm: lowertoken}, we use a reduction from the bipartite maximal matching problem to the $2$-bounded stable assignment problem in order to obtain the desired lower bound. Consider a bipartite graph $G = (U\cup V, E)$ that is an input instance to the maximal matching problem and let $S = C = \Delta$. Now, first we find a solution to the $2$-bounded stable assignment problem on $G$, where the nodes in $U$ are the customers, the nodes in $V$ are the servers, and we interpret the edges which connect a customer with the chosen server as our (preliminary set of) matching edges. Then, each server with more than one incident edge in the preliminary matching keeps exactly one of those edges as a matching edge, and removes all of the others from the preliminary matching. In the following we show that the resulting edge set is a correct solution to the maximal matching problem.

    The correctness of the output to the $2$-bounded stable assignment problem and the post-processing step ensure that each node is matched to at most one other node. Now consider an unmatched customer node $u$. Node $u$ is unmatched because the chosen server $v$ removed the connecting edge $\{ u, v \}$ from the matching, for which $v$ must have had another incident edge in the preliminary matching. This implies that in the solution to the $2$-bounded stable assignment problem, $v$ must have had load at least $2$, which in turn implies that $u$ has no neighbor of load $0$ in that solution. Hence, every neighbor of $u$ is matched in the preliminary matching, and hence also in the final matching (which allows $u$ to be unmatched).

    Finally, consider an unmatched server node $w$. Node $w$ must have had load $0$ in the solution to the $2$-bounded stable assignment problem, and therefore each neighbor of $w$ must have chosen a server of load at most $1$, according to the definition of the $2$-bounded stable assignment problem. This implies that for each neighbor of $w$, the incident edge in the preliminary matching must also be in the final matching as load-$1$ servers do not remove any incident edge from the preliminary matching. Hence, $w$ being unmatched does not violate the maximal matching constraints.

    This concludes the description of the reduction. Since the post-processing step only takes $1$ round of additional communication, and the bipartite maximal matching problem cannot be solved in $o(\Delta + \log n / \log \log n)$ rounds \cite{Balliu2019}, the same holds for the $2$-bounded stable assignment problem.
\end{proof}

As for any $k \geq 2$, any solution to the $k$-bounded stable assignment problem is also a solution to the $2$-bounded stable assignment problem, the above lower bound also holds for the $k$-bounded stable assignment problem, for all $k \geq 2$.

\begin{theorem}\label{thm: 2boundstable}
    There is a deterministic algorithm that solves the $2$-bounded stable assignment problem in $O(C)$ communication rounds in the \local model.
\end{theorem}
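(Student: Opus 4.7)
The plan is to reduce the 2-bounded stable assignment problem to computing a bipartite matching $M$ that is both maximal (every edge has at least one matched endpoint) and free of length-3 augmenting paths $u$--$v$--$u'$--$w$ (where $u, w$ are unmatched and $(u', v) \in M$). Given such an $M$, the assignment is immediate: each matched customer is sent to its matched server, and each unmatched customer---which by maximality has only matched server neighbors---is sent to any of its matched neighbors.

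To obtain the maximality property, I would run the customer-side iterative proposal protocol for $C + 1$ rounds: each round, every unmatched customer proposes to its lowest-ID unmatched server neighbor, and every server accepts its lowest-ID proposer. In each round, each unmatched customer either becomes matched or has its number of unmatched server neighbors strictly decrease (the proposed-to server becomes matched), so $C + 1$ rounds suffice for a maximal matching. To obtain the no-length-3-augmenting-path property, I would then perform $O(C)$ additional rounds of a length-3 augmentation procedure: each round, every unmatched customer $u$ attempts to augment along $u$--$v$--$u'$--$w$ for some matched neighbor $v$ of $u$ whose primary $u'$ has an unmatched server neighbor $w$, with conflicts at $w$ resolved by $w$ accepting the lowest-ID proposer (a successful augmentation rematches $u$ to $v$ and $u'$ to $w$).

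The 2-boundedness of the resulting assignment follows from a local check. Unmatched customers are happy because all their server neighbors are matched (load $\ge 1$), so no load-0 neighbor exists. A matched customer $u$ at server $v$ can be unhappy only if $\operatorname{load}(v) \ge 2$ and $u$ has a load-0 neighbor; but $\operatorname{load}(v) \ge 2$ means some unmatched customer $u''$ must have been assigned to $v$, and since unmatched customers assign only to matched servers, any load-0 server must be an unmatched server $w$, yielding the forbidden length-3 augmenting path $u''$--$v$--$u$--$w$.

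The main obstacle will be showing that the length-3 augmentation phase terminates within $O(C)$ rounds despite the potential for many simultaneous proposals conflicting at a common unmatched server. The key structural observation I plan to exploit is a monotonicity of ``safety'': once a matched server $v$'s primary has no unmatched server neighbors, $v$ stays in that state forever, because the set of unmatched servers is monotonically shrinking under augmentation, and any new primary installed at $v$ via augmentation was a previously unmatched customer whose server neighbors were all already matched by the maximality from the first phase. Combined with a careful scheduling of each unmatched customer's attempts across its at most $C$ matched server neighbors, this monotonicity should yield the desired $O(C)$ round bound.
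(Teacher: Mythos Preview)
Your reduction to a maximal matching with no length-$3$ augmenting paths is correct, and phase~$1$ (customer-side proposals for $C+1$ rounds) indeed yields a maximal matching. The gap is entirely in phase~$2$.

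The monotonicity-of-safety observation is valid: once a matched server $v$'s primary has no unmatched server neighbours, $v$ stays safe, since an augmentation can only install a previously unmatched customer as the new primary, and by phase-$1$ maximality such a customer already has all server neighbours matched. But this observation alone gives only $O(C^2)$, not $O(C)$. If in each round an unmatched customer $u$ picks a single unsafe neighbour $v$ and attempts to augment, the possible outcomes are (i)~$u$ gets matched, (ii)~$v$ becomes safe, or (iii)~the augmentation fails at the far end $w$, in which case $v$'s primary $u'$ merely loses one unmatched server neighbour. Outcome~(iii) can repeat up to $C-1$ times for a fixed $v$ before $v$ becomes safe, and $u$ has up to $C$ such neighbours. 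Your proposal defers the missing step to an unspecified ``careful scheduling'', but you have not exhibited a schedule that collapses this double-$C$ dependence, and you also need to resolve conflicts at both ends of the path (at $v$ and at $w$), whereas you only mention~$w$.

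The paper sidesteps this difficulty by abandoning the matching detour entirely. It works directly with an arbitrary initial assignment: in each round, every server with load $\ge 2$ redirects each of its unhappy customers to one of that customer's load-$0$ server neighbours, keeping one customer so that its own load never drops below~$1$. The monotonicity is simpler---once a server has load $\ge 1$ it keeps it---and the progress measure is local to each customer: in every round, an unhappy customer is either redirected to one of its load-$0$ neighbours (so that neighbour ceases to be load-$0$, permanently) or is the one kept behind, in which case its server now has load exactly~$1$ and the customer is happy. Hence each customer's count of load-$0$ neighbours drops by one per round of unhappiness, giving $O(C)$ directly. This is essentially your safety idea applied one step closer to the actual problem; by not routing progress through a fixed primary $u'$ you avoid the extra factor of~$C$.
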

\begin{proof}
    We identify the customers as oriented hyper-edges with one node chosen as the head of the hyper-edge. Consider the following algorithm that makes heavy use of the fact that we only have three different types of load per server. Initialize an arbitrary orientation of the hyper-edges and recall that a happy customer $c$ is such that $c$ cannot locally improve by changing its server. For each server $s$, until all incoming hyper-edges are happy:
    \begin{itemize}
        \item If there is at least one happy incoming hyper-edge, then direct all incoming unhappy hyper-edges to a server with load $0$.
        \item Otherwise, direct all except one incoming unhappy hyper-edge to a server with load $0$.
    \end{itemize}

    Observe that if the in-degree of a server is at least $1$ in some iteration, it never drops below $1$. This is guaranteed since only incoming hyper-edges are redirected and always at least one incoming edge is preserved. Furthermore, since hyper-edges are always directed to nodes of indegree $0$, a server with indegree $1$ never changes its indegree. It follows that a happy hyper-edge never becomes unhappy.

    Now, we want to argue that for an unhappy hyper-edge $c$, in each iteration, the number of adjacent servers with in-degree $0$ decreases by at least $1$. From this, the runtime bound of $O(C)$ communication rounds follows. If $c$ is redirected to an adjacent server $s'$ in some iteration, then due to the above observations, the in-degree of $s'$ never drops below $1$. If $c$ is not redirected, let $s$ be the server that $c$ is directed to. It must be the case that $c$ was the customer chosen in the ``else'' statement and hence, $c$ becomes the only incoming customer of $s$. This implies that $c$ becomes happy. Correctness of the algorithm follows from the fact that it is executed until all customers are happy.
\end{proof}

\section*{Acknowledgements}
We thank Orr Fischer, Juho Hirvonen, and Tuomo Lempi\"ainen for valuable discussions. This project has received funding from the European Union’s Horizon 2020 research and innovation programme under the Marie Sk\l{}odowska-Curie grant agreement No.\ 840605.

\bibliographystyle{plainnat}
\bibliography{articles}

\end{document}